\def\leqn{\tagsleft@true} 
\def\reqn{\tagsleft@false} 
\def\fleq{\@fleqntrue\let\mathindent\@mathmargin \@mathmargin=0pt} 
\def\cneq{\@fleqnfalse} 
\theoremstyle{plain}
\newtheorem{theorem}{Theorem}
\newtheorem{proposition}[theorem]{Proposition}
\newtheorem{lemma}[theorem]{Lemma}
\newtheorem{corollary}[theorem]{Corollary}
\newtheorem*{remark}{Remark}
\newtheorem{definition}[theorem]{Definition}
\def\sDiv{\mathscr{D}}
\def\sCurl{\mathscr{C}}
\def\sCurlDagger{\mathscr{C}^\dagger}
\def\sTwist{\mathscr{T}}
\newcommand{\SymDyadBasis}[4]{\mathcal{B}^{#1, #2}_{ #3, #4}}
\def\clap#1{\hbox to 0pt{\hss#1\hss}}
\DeclareMathOperator{\tho}{\text{\textthorn}}
\DeclareMathOperator{\edt}{\eth}
\def\SymSpace{\mathcal{S}}
\newcommand{\squareAB}{\square}
\newcounter{mnotecount}[section]
\author[S. Aksteiner]{Steffen Aksteiner \orcidlink{0000-0002-0009-4292}}
\email{steffen.aksteiner@aei.mpg.de}
\address{Albert Einstein Institute, Am M\"uhlenberg 1, D-14476 Potsdam, Germany }
\author[T. B\"{a}ckdahl]{Thomas B\"{a}ckdahl \orcidlink{0000-0003-3240-2445}}
\email{thomas.backdahl@chalmers.se}
\address{Mathematical Sciences, Chalmers University of Technology and University of Gothenburg, SE-412~96 Gothenburg, Sweden}
\newenvironment{mma}{%
\fleq%
\tabulinesep = 4pt%
\begin{equation}%
\begin{tabu} to 0.93\linewidth {@{\hspace{3ex}}r@{\hspace{2ex}}X@{}}%
}{%
\end{tabu}%
\end{equation}%
\cneq%
}%
\newenvironment{mma*}{%
\fleq%
\tabulinesep = 4pt%
\begin{equation*}%
\begin{tabu} to 0.93\linewidth {@{\hspace{3ex}}r@{\hspace{2ex}}X@{}}%
}{%
\end{tabu}%
\end{equation*}%
\cneq%
}%
\newcommand{\mmain}[1]{\mmaintext &  \texttt{\bfseries #1}}
\newcommand{\mmaout}[1]{\mmaouttext & ${#1}$ }
\newcommand{\mmanoout}[1]{\phantom{\mmaouttext} & ${#1}$ }
\newcommand{\mmaouttext}{\mmainouttext{Out\,=}}
\newcommand{\mmaintext}{\mmainouttext{\hspace{2ex}In\,:=}}
\newcommand{\mmainouttext}[1]{\textrm{\normalfont{\small\textcolor{darkgray}{#1}}}}
\newcommand{\ruledelayed}{\tikz[baseline=-\the\dimexpr\fontdimen22\textfont2\relax]{\filldraw (0.04,0.05) circle (0.6pt);	\filldraw (0.04,-0.05) circle (0.6pt); \draw[line width=0.8pt] (0.07,0) -- (0.25,0); \draw[line width=0.8pt] (0.16,0.08) -- (0.25,0) -- (0.16,-0.08);}\;}
\newcommand{\funcref}[1]{\hyperref[#1]{{\ttfamily\nameref*{#1}}}}
\newsavebox{\coloredbgbox}
\begin{document}

\title{A space-time calculus based on symmetric 2-spinors}
\allowdisplaybreaks[2]

\begin{abstract}
In this paper we present a space-time calculus for symmetric spinors, including a product with a number of index contractions followed by symmetrization. 
As all operations stay within the class of symmetric spinors, no involved index manipulations are needed. In fact spinor indices are not needed in the formalism. It is also general because any covariant tensor expression in a 4-dimensional Lorentzian spacetime can be translated to this formalism.
The computer algebra implementation \emph{SymSpin} as part of \emph{xAct} for \emph{Mathematica} is also presented.
\end{abstract}

\maketitle

\section{Introduction}
When working with tensorial expressions, one usually encounters difficulties handling index manipulations due to complicated symmetries. 
Techniques including group theoretical calculations and Young tableaux have been introduced to try to tackle these problems. However, their complexity grows quickly with the size of the problem.
The purpose of this paper is to present a formalism based on 2-spinors that aims to simplify the situation by utilizing the symmetry properties of irreducible spinors.

Let $(\mathcal{M}, g_{ab})$ be a 4-dimensional manifold with metric $g_{ab}$ of Lorentzian signature and admitting a spin structure with spin metric $\epsilon_{AB}$. It is well known that any tensor field on $\mathcal{M}$ can be expressed in terms of 2-spinors,  which in turn can be decomposed into irreducible symmetric spinors \cite[Prop 3.3.54]{PR1}. For instance a valence $(3,0)$ spinor can be decomposed as
\begin{align}
T_{ABC}={}&T_{(ABC)}
 + \tfrac{1}{3} T^{D}{}_{D(B}\epsilon_{C)A}
 -  \tfrac{1}{3} \epsilon_{A(B}T^{D}{}_{C)D}
 - \tfrac{1}{2} T_{A}{}^{D}{}_{D} \epsilon_{BC}.
\end{align}
Therefore, it is sufficient to work with with symmetric spinors. To fully establish this perspective, a symmetric product for symmetric spinors with a number of contractions is needed. It is the intention of this work to introduce the corresponding algebra and to derive its basic properties. 
In particular, with these operations we stay within the algebra of symmetric spinors. This offers great simplifications, and speeds up the calculations.
Furtheremore, no relevant information is left in the indices, and we therefore get an index-free compact formalism. 

We have previously described the decomposition of the covariant derivative \cite{AndBaeBlu14a}, leading to four fundamental spinor operators, which can be viewed as a special case. Also, the symmetric product is a generalization of some special operators, like the $\mathcal{K}^i$ operators defined in \cite[Definition II.4]{2016arXiv160106084A}. Therefore, all properties of such operators can easily be derived from the corresponding properties of the symmetric product described in this paper.

The formalism has many potential applications, see \cite{CompComplex},\cite{AdjointOp}. As a simple example, consider a condition of the form
\begin{align}
0={}&K_{AB}{}^{FH} L_{F}{}^{C} \varphi_{HC}
 + M_{(A}{}^{C}\varphi_{B)C},
\end{align}
for symmetric spinors $K, L, M, \varphi$. For arbitrary $\varphi$ a systematic computation, using the techniques of this paper, shows that the conditions on $K,L, M$ are of the form
\begin{align}
K^{G}{}_{(ABC}L_{|G|F)}={}&0,&
M_{AB}={}&\tfrac{1}{2} K^{CF}{}_{AB} L_{CF},
\end{align}
see Section~\ref{sec:example} for details. The same techniques have been used in \cite{JacBac2022} to derive conditions on the spacetime for the existence of second order symmetry operators for the massive Dirac equation.

The formalism is implemented in the \emph{SymSpin} \cite{SymSpinWeb} package for \emph{xAct} \cite{xActWeb} for \emph{Mathematica}.

In Section~\ref{sec:SymSpinAlg} we introduce the symmetric product and state basic properties in Theorem~\ref{thm:SymProdProperties}. The expansion of a product into symmetric products is discussed in Lemma~\ref{lem:IrrDecProduct}. The irreducible parts of the Levi-Civita connection, its commutators, curvature and Leibniz rules are discussed in Section~\ref{sec:Derivatives}. A concise form the the dyad components of such symmetric spinors is given in Section~\ref{sec:GHP}. The computer algebra implementation is discussed in Section~\ref{sec:SymSpinPackage} and Section~\ref{sec:Conclusions} contains some conclusions.

\section{Symmetric spinor algebra} \label{sec:SymSpinAlg} 
Let $\SymSpace_{k,l}$ be the space of symmetric valence $(k,l)$ spinors. In abstract index notation, elements are of the form $\phi_{A_1 \dots A_k A'_1 \dots A'_l} \in \SymSpace_{k,l}$. Sometimes it is convenient to suppress the valence and/or indices and we write e.g. $\phi \in \mathcal{S}$ or $\phi \in \mathcal{S}_{k,l}$.

\subsection{Symmetric product}
Given two symmetric spinors, we introduce a product which involves a given number of contractions and symmetrization afterwards.

\begin{definition} \label{def:SymProd}
Let $k,l,n,m,i,j$ be integers with $i \leq min(k,n)$ and $j \leq min(l,m)$. The symmetric product is a bilinear form 
\begin{align}
\overset{i,j}{\odot}: \SymSpace_{k,l} \times \SymSpace_{n,m} \to{}& \SymSpace_{k+n-2i,l+m-2j}.
\end{align}
For $\phi \in \mathcal{S}_{k,l}, \psi \in \mathcal{S}_{n,m}$, it is given by
\begin{align}
\label{eq:SymMultDef}
(\phi\overset{i,j}{\odot}\psi)_{A_1 \dots A_{k+n-2i}}^{A'_1 \dots A'_{l+m-2j}}={}& 
\phi_{(A_1 \dots A_{k-i-1}}^{(A'_1 \dots A'_{l-j-1}| B_1 \dots B_i B'_1 \dots B'_j|} 
\psi^{A'_{l-j} \dots A'_{l+m-2j})}_{A_{k-i} \dots A_{k+n-2i})B_1 \dots B_i B'_1 \dots B'_j}
\end{align}
\end{definition} 

For many commutator relations we will need the following coefficients.
\begin{definition}
Define the associativity coefficients
\begin{align} 
F_{i,r,k}^{t,m,M}={}&\sum_{p=0}^M\sum_{q=0}^{M-p}
\frac{(-1)^{t-p+q} \binom{k -  m}{p} \binom{m}{M -  p -  q} \binom{k -  m -  p}{q} \binom{r - m}{t - p} \binom{i -  t}{M -  p -  q} \binom{t - p}{q}}{\binom{i + k -  M -  p + 1}{M -  p} \binom{M -  p}{q} \binom{k - 2 m + r}{t}}.
\end{align}
\end{definition}
Observe that the limits can be restricted to $\max(0, m -  r + t) \leq p \leq \min(k -  m, M, t)$ and $\max(0, M - m -  p, M - i -  p + t) \leq q \leq \min(k -  m -  p, M -  p, t - p)$ because the terms are zero outside this range.

For multiple products we will use the convention $\omega\overset{m,n}{\odot}\varphi\overset{t,u}{\odot}\phi=\omega\overset{m,n}{\odot}(\varphi\overset{t,u}{\odot}\phi)$.

\begin{theorem} \label{thm:SymProdProperties}
Let $\phi \in \SymSpace_{i,j}, \omega \in \SymSpace_{r,s}, \varphi \in \mathcal{S}_{k,l}$. The symmetric product $\odot$ of Definition \ref{def:SymProd} has the following properties:
\begin{subequations} 
\begin{enumerate}
 \item \label{part:SymProdProp1} It is graded anti-commutative:
 \begin{align}
  \phi\overset{m,n}{\odot}\omega ={}& (-1)^{m+n}\omega\overset{m,n}{\odot}\phi \label{eq:Commutativity}
 \end{align}
 \item \label{part:SymProdProp2} It is non-associative:
\begin{align} \label{eq:NonAssociative}
(\omega\overset{m,n}{\odot}\varphi)\overset{t,u}{\odot}\phi
={}&\negmedspace\negmedspace\sum_{M=0}^{\min(i,k)}\sum_{N=0}^{\min(j,l)}\negmedspace (-1)^{t+u+M+N} F_{i,r,k}^{t,m,M}F_{j, s, l}^{u, n, N}
\omega\overset{t+m-M,u+n-N}{\odot}\varphi\overset{M,N}{\odot}\phi.
\end{align}
 \item \label{part:SymProdProp3} It is Hermitian:
 \begin{align}
  \overline{\phi\overset{m,n}{\odot}\omega} = \overline\phi\overset{n,m}{\odot}\overline\omega
 \end{align}
\end{enumerate}
\end{subequations} 
\end{theorem}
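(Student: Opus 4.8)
The plan is to treat the three parts separately, with parts~(\ref{part:SymProdProp1}) and~(\ref{part:SymProdProp3}) following directly from the contraction formula~\eqref{eq:SymMultDef} by bookkeeping and part~(\ref{part:SymProdProp2}) carrying the combinatorial substance. For the graded anti-commutativity, I would observe that $\overset{m,n}{\odot}$ joins $m$ unprimed and $n$ primed index pairs, each through the antisymmetric spin metric, with $\phi$ carrying the raised and $\omega$ the lowered member of every pair. Interchanging $\phi$ and $\omega$ reverses these roles, and since $\alpha^{A}\beta_{A}=-\alpha_{A}\beta^{A}$ (and likewise for primed indices), each of the $m+n$ contractions contributes a factor $-1$; the free indices are all symmetrized, so relabelling them is free, giving exactly $(-1)^{m+n}$. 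For the Hermitian property, complex conjugation interchanges primed and unprimed indices and sends $\epsilon_{AB}$ to $\epsilon_{A'B'}$; applying it term by term to~\eqref{eq:SymMultDef} turns the $m$ unprimed contractions into primed ones and the $n$ primed into unprimed, swaps the two symmetrizations, and replaces $\phi,\omega$ by $\overline\phi,\overline\omega$, yielding $\overline\phi\overset{n,m}{\odot}\overline\omega$.

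The substance is the non-associativity~(\ref{part:SymProdProp2}). The first point is that the unprimed and primed sectors decouple, since the symmetrizations and contractions on unprimed indices are performed independently of those on primed indices; this is precisely why the coefficient in~\eqref{eq:NonAssociative} factorizes as $F^{t,m,M}_{i,r,k}F^{u,n,N}_{j,s,l}$, and it reduces the problem to the purely unprimed case of valences $(i,0),(r,0),(k,0)$, the primed factor being obtained verbatim with $(i,r,k,t,m,M)$ replaced by $(j,s,l,u,n,N)$. In that case I would expand the outer product $(\omega\overset{m,0}{\odot}\varphi)\overset{t,0}{\odot}\phi$ by writing out the nested symmetrizers and redistributing the $t$ contractions with $\phi$ according to how many of them, say $M$, act on indices originating from $\varphi$ rather than from $\omega$; this integer becomes the summation variable, and rewriting the inner symmetrizer as a symmetrization of $\varphi\overset{M,0}{\odot}\phi$ yields a term $\omega\overset{t+m-M,0}{\odot}(\varphi\overset{M,0}{\odot}\phi)$ with a rational combinatorial weight.

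The main obstacle is to show that this weight equals $F^{t,m,M}_{i,r,k}$, which is a binomial (hypergeometric) summation identity. I would isolate it as a lemma and prove it using the polynomial model in which a symmetric $(k,0)$ spinor is a binary form of degree $k$ and $\overset{M,0}{\odot}$ is the $M$-th transvectant, so that the identity becomes the re-expansion of an iterated transvectant, provable by a Vandermonde-type argument or a generating-function computation. Finally, the upper limits $\min(i,k)$ and $\min(j,l)$ on the double sum, and the sign $(-1)^{t+u+M+N}$, emerge from the ranges in which the two factors are nonzero and from the signs accumulated when re-expressing contractions, matching the restrictions recorded after the definition of $F$.
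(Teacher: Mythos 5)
Your treatments of parts~(\ref{part:SymProdProp1}) and~(\ref{part:SymProdProp3}) coincide with the paper's: the sign in \eqref{eq:Commutativity} comes from the see-saw (the paper's ``zee-zaw'') rule applied to the $m+n$ contracted index pairs, and Hermiticity is complex conjugation of \eqref{eq:SymMultDef}. The reduction of part~(\ref{part:SymProdProp2}) to the unprimed sector, with the primed factor obtained by substituting $(j,s,l,u,n,N)$ for $(i,r,k,t,m,M)$, is also the paper's first move. The gap is in your central step for part~(\ref{part:SymProdProp2}).

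When you expand the symmetrizer of $\omega\overset{m,0}{\odot}\varphi$ and distribute the $t$ contractions with $\phi$ so that, say, $p$ of them land on indices originating from $\varphi$, the resulting term is a three-factor product in which $\omega$ is contracted into \emph{both} $\varphi$ (through the $m$ original contractions) and $\phi$, with all free indices symmetrized in one block. Such a term is \emph{not} of the form $\omega\overset{t+m-p,0}{\odot}(\varphi\overset{p,0}{\odot}\phi)$: the inner product $\varphi\overset{p,0}{\odot}\phi$ carries its own symmetrization over the free indices of $\varphi$ and $\phi$ alone, which the expanded term does not possess, and imposing that symmetrization generates spin-metric (trace) terms. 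Extracting those traces is exactly the content of the paper's Lemma~\ref{lem:IrrDecProduct} and Proposition~\ref{prop:PropEight}, and it turns a single distribution value $p$ into a whole sum of terms $\omega\overset{t+m-M,0}{\odot}(\varphi\overset{M,0}{\odot}\phi)$ with $M\geq p$. Consequently your identification of the distribution count with the summation variable $M$ of \eqref{eq:NonAssociative} is incorrect: in the paper's proof that count is the index $p$ summed \emph{inside} $F_{i,r,k}^{t,m,M}$, and the second inner index $q$ arises from the trace extraction --- this is why $F$ is a double sum rather than a single binomial ratio. A plan that assigns one rational weight per distribution value therefore sums the wrong family of terms, and no Vandermonde-type identity can repair that afterwards. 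Your transvectant remark does point at a genuine alternative --- the re-expansion of the iterated transvectant $((\omega,\varphi)_m,\phi)_t$ in terms of $(\omega,(\varphi,\phi)_M)_{t+m-M}$ is a classical Gordan-series fact, and one could prove \eqref{eq:NonAssociative} entirely in that model --- but in your proposal it is invoked only to verify coefficients attached to a decomposition that was not correctly derived, so the gap stands.
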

Combining the first two points, we get the following useful relation.
\begin{corollary} 
\begin{align}
\phi\overset{t,u}{\odot}\omega\overset{m,n}{\odot}\varphi
={}&\negmedspace\negmedspace\sum_{M=0}^{\min(i,k)}\sum_{N=0}^{\min(j,l)}\negmedspace F_{i,r,k}^{t,m,M}F_{j, s, l}^{u, n, N}
\omega\overset{t+m-M,u+n-N}{\odot}\phi\overset{M,N}{\odot}\varphi
\label{eq:SymMultCommutator1}
\end{align}
\end{corollary}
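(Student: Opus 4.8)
The plan is to derive this identity purely formally from the two properties already established in Theorem~\ref{thm:SymProdProperties}, with no further appeal to the index definition \eqref{eq:SymMultDef}: it is essentially an exercise in bookkeeping of signs and valences. Throughout I keep the right-association convention, so the left-hand side of \eqref{eq:SymMultCommutator1} reads $\phi\overset{t,u}{\odot}(\omega\overset{m,n}{\odot}\varphi)$, with $\phi\in\SymSpace_{i,j}$, $\omega\in\SymSpace_{r,s}$ and $\varphi\in\SymSpace_{k,l}$.

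First I would apply graded anti-commutativity \eqref{eq:Commutativity} to the outer product, treating the inner factor $\omega\overset{m,n}{\odot}\varphi\in\SymSpace_{r+k-2m,\,s+l-2n}$ as a single symmetric spinor. This moves $\phi$ past it at the cost of a sign:
\[
\phi\overset{t,u}{\odot}(\omega\overset{m,n}{\odot}\varphi)=(-1)^{t+u}\,(\omega\overset{m,n}{\odot}\varphi)\overset{t,u}{\odot}\phi.
\]
Now the reassociation formula \eqref{eq:NonAssociative} applies verbatim to the right-hand side, producing the double sum over $M,N$ with coefficients $F_{i,r,k}^{t,m,M}F_{j,s,l}^{u,n,N}$, an overall factor $(-1)^{t+u+M+N}$ inside the sum, and summand $\omega\overset{t+m-M,u+n-N}{\odot}(\varphi\overset{M,N}{\odot}\phi)$. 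The accumulated sign is then $(-1)^{t+u}(-1)^{t+u+M+N}=(-1)^{M+N}$.

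At this stage the summand still carries $\varphi$ and $\phi$ in the opposite order to the claimed identity, so the final step is a second application of \eqref{eq:Commutativity}, now to the inner product $\varphi\overset{M,N}{\odot}\phi=(-1)^{M+N}\phi\overset{M,N}{\odot}\varphi$. By bilinearity this sign can be pulled out of the outer $\overset{t+m-M,u+n-N}{\odot}$ product, so each summand acquires a further factor $(-1)^{M+N}$. Collecting the two accumulated sign contributions gives $(-1)^{M+N}(-1)^{M+N}=(-1)^{2(M+N)}=1$, all signs cancel, and the stated formula \eqref{eq:SymMultCommutator1} drops out.

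The only point requiring care—and the place where a careless argument could go wrong—is the sign and valence bookkeeping. One must verify that each invocation of anti-commutativity is applied to a genuine symmetric product with admissible contraction indices, so that the hypotheses $t\le\min(i,r+k-2m)$, $u\le\min(j,s+l-2n)$ of Definition~\ref{def:SymProd} hold for the outer product and that the intermediate valences stay non-negative. Since every step is an identity between elements of one and the same space $\SymSpace_{i+r+k-2m-2t,\,j+s+l-2n-2u}$, no ordering or convergence subtleties arise, and the result is immediate once the signs are seen to telescope to unity.
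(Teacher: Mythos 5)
Your proof is correct and is precisely the paper's argument: the paper proves this corollary simply by remarking that it follows from ``combining the first two points'' of Theorem~\ref{thm:SymProdProperties}, i.e.\ applying graded anti-commutativity \eqref{eq:Commutativity} to the outer product, reassociating via \eqref{eq:NonAssociative}, and applying anti-commutativity again to the inner product so that the signs $(-1)^{t+u}(-1)^{t+u+M+N}(-1)^{M+N}$ cancel. Your version just spells out the sign and valence bookkeeping that the paper leaves implicit.
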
 

\subsection{Irreducible decomposition}
A key property of the symmetric product is that the product of two symmetric spinors can always be decomposed in terms of symmetric products and spin metrics.
\begin{definition}
We will use the following notation for for products of spin metrics.
\begin{subequations}
\begin{align}
\epsilon_{A_1\dots A_p}^{B_1\dots B_p}&=\epsilon_{A_1}{}^{B_1}\dots \epsilon_{A_p}{}^{B_p},\\
\bar\epsilon_{A_1'\dots A_q'}^{B_1'\dots B_q'}&=\bar\epsilon_{A_1'}{}^{B_1'}\dots \bar\epsilon_{A_q'}{}^{B_q'}.
\end{align}
\end{subequations}
\end{definition}
\begin{lemma}\label{lem:IrrDecProduct}
For $\phi \in \SymSpace_{i,j}, \varphi \in \mathcal{S}_{k,l}$ with $p$ unprimed and $q$ primed contractions, we have the irreducible decomposition
\begin{align}
&\hspace{-3ex}\phi^{C_{1}\dots C_pC_1'\dots C_q'}_{A_1\dots A_{i-p}A_1'\dots A_{j-q}'}\varphi_{C_1\dots C_pC_1'\dots C_q'}^{B_1\dots B_{k-p} B_1'\dots B_{l-q}'}\nonumber\\
={}&(-1)^{p+q}\negmedspace\sum_{m=p}^{\min(i,k)}\sum_{n=q}^{\min(j,l)}\Bigl(
\frac{(-1)^{m+n}\binom{i -  p}{m -  p} \binom{k -  p}{m -  p}\binom{j-q}{n-q} \binom{l-q}{n-q}}{\binom{i + k -  m -  p + 1}{m -  p}\binom{j + l - n - q + 1}{n-q}} \nonumber\\
&\times \epsilon_{(A_1\dots A_{m-p}}^{(B_1\dots B_{m-p}}(\phi\overset{m,n}{\odot}\varphi){}_{A_{m-p+1}\dots A_{i-p})(A_{n-p+1}'\dots A_{j-q}'}^{B_{m-p+1}\dots B_{k-p})(B_{n-q+1}'\dots B_{l-q}'}\bar\epsilon_{A_1'\dots A_{n-q}')}^{B_1'\dots B_{n-q}')}
\Bigr).
\end{align}
\end{lemma}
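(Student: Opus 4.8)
The plan is to reduce the statement to a purely unprimed identity and then reassemble. Because the unprimed contractions are carried by $\epsilon_{AB}$ and the primed ones by $\bar\epsilon_{A'B'}$, and these two spin metrics act on independent index groups, the bilinear form on the left factorizes: one may first resolve the $C_1\dots C_p$ contraction together with the $A$/$B$ free indices, holding the primed structure fixed, and afterwards repeat the argument on $C_1'\dots C_q'$ and the $A'$/$B'$ indices. Concretely I would first prove the one-sided identity
\begin{align}
\phi^{C_{1}\dots C_p}_{A_1\dots A_{i-p}}\varphi_{C_1\dots C_p}^{B_1\dots B_{k-p}}
={}&(-1)^{p}\sum_{m=p}^{\min(i,k)}
\frac{(-1)^{m}\binom{i-p}{m-p}\binom{k-p}{m-p}}{\binom{i+k-m-p+1}{m-p}}
\,\epsilon_{(A_1\dots A_{m-p}}^{(B_1\dots B_{m-p}}
(\phi\overset{m,0}{\odot}\varphi)_{A_{m-p+1}\dots A_{i-p})}^{B_{m-p+1}\dots B_{k-p})},\nonumber
\end{align}
and then note that the coefficient in the lemma is precisely the product of this unprimed coefficient with its primed counterpart (obtained under $i,k,m,p\mapsto j,l,n,q$), so that the combined sign $(-1)^{p+q}(-1)^{m+n}$ and the four numerator binomials assemble automatically.

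For the one-sided identity I would first justify the structural form of the expansion. The left-hand side is a spinor that is symmetric separately in $A_1\dots A_{i-p}$ and in $B_1\dots B_{k-p}$, and since the spin space is two-dimensional every such bi-symmetric spinor has a unique irreducible decomposition into totally symmetric spinors (symmetric across all free $A$ and $B$ indices at once) dressed with a definite number of factors $\epsilon_A{}^B$; this is the $SL(2,\mathbb{C})$ Clebsch--Gordan series for $\mathcal{S}_{i-p,0}\otimes\mathcal{S}_{k-p,0}$. The totally symmetric constituent carrying $m-p$ factors of $\epsilon$ is, up to normalization, exactly $\phi\overset{m,0}{\odot}\varphi$: full symmetrization of the left side reproduces the $m=p$ term by the symmetry of $\phi$ and $\varphi$, and each additional $\epsilon$ pair corresponds to one further internal contraction, i.e.\ to raising $m$ by one. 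Thus the only remaining content is the determination of the scalar coefficients $c_m$.

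These coefficients I would pin down by triangular elimination. Contracting both sides with $s$ factors $\epsilon^{A}{}_{B}$, i.e.\ tracing $s$ of the $A$-indices against $s$ of the $B$-indices, sends the left side to a more deeply contracted product and, on the right, annihilates every term with $m-p<s$ while multiplying each surviving term by a computable factor coming from the normalized symmetrizations and from the elementary traces of the $\epsilon_A{}^B$ legs. This produces a unitriangular linear system for the $c_m$ which is solved recursively from the top term $m=\min(i,k)$ downwards; equivalently one may induct on $m-p$, peeling off a single $\epsilon$ at each step through the two-dimensional identity $\psi_{[AB]}=\tfrac12\epsilon_{AB}\psi_{C}{}^{C}$ applied index-pairwise, which reduces the claim for $(i,k,p)$ to its analogue with one more contraction.

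The genuine obstacle, and the real content of the lemma, is to show that this recursion closes in the stated form. The two numerator binomials count the choices of which indices of $\phi$ and of $\varphi$ are tied to the $\epsilon$ legs, while the denominator $\binom{i+k-m-p+1}{m-p}$ is the normalization generated by resymmetrizing the $i+k-2p$ free indices and encodes the dimensions of the participating $SL(2,\mathbb{C})$ irreducibles. I would verify this closed form by recognizing the summation that arises in the elimination as a balanced hypergeometric series and applying a Chu--Vandermonde or Saalsch\"utz-type identity, and I would cross-check the outcome against the automated computation in the \emph{SymSpin} package. With the one-sided coefficient confirmed, the primed factor is identical under $i,k,m,p\mapsto j,l,n,q$, and the full decomposition of the lemma follows by the factorization described above.
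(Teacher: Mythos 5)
Your proposal follows essentially the same route as the paper's proof: the same ansatz from the irreducible (Clebsch--Gordan) decomposition of \cite[Prop 3.3.54]{PR1}, determination of the coefficients by taking traces --- where full symmetrization in fact makes the system diagonal, not merely triangular, so each $c_m$ is isolated directly --- and assembly of the primed part by the identical argument (the paper invokes complex conjugation). The only real difference is cosmetic: the trace factors in the paper telescope into a simple product of binomials, $\prod_{q=0}^{p-1}\tfrac{(m-q)(i+k-m+1-q)}{(i-q)(k-q)}$, so no Saalsch\"utz-type hypergeometric identity or software cross-check is needed to close the recursion.
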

\begin{proof}
Let $\phi$ and $\varphi$ be symmetric of valence $(i,0)$ and $(k,0)$ respectively. By \cite[Prop 3.3.54]{PR1} the irreducible decomposition of the product must have the following form
\begin{align}
\label{eq:IrrDecAnsatz}
\phi_{A_1\dots A_i}\varphi^{B_1\dots B_k}=\sum_{m=0}^{\min(i,k)} c_m \epsilon_{(A_1\dots A_m}^{(B_1\dots B_m}(\phi\overset{m,0}{\odot}\varphi){}_{A_{m+1}\dots A_i)}^{B_{m+1}\dots B_k)}
\end{align}
Taking a trace of the summand, we find by partial expansions of the symmetrizations that
\begin{align}
&\hspace{-3ex}\epsilon_{(A_1\dots A_m}^{(B_1\dots B_m}(\phi\overset{m,0}{\odot}\varphi){}_{A_{m+1}\dots A_i)}^{B_{m+1}\dots B_{k-1}A_i)}\nonumber\\
={}&\tfrac{m}{i}\epsilon_{A_i(A_1\dots A_{m-1}}^{(B_1\dots B_m}(\phi\overset{m,0}{\odot}\varphi){}_{A_m\dots A_{i-1})}^{B_{m+1}\dots B_{k-1}A_i)}
+\tfrac{i-m}{i}\epsilon_{(A_1\dots A_{m}}^{(B_1\dots B_m}(\phi\overset{m,0}{\odot}\varphi){}_{A_{m+1}\dots A_{i-1})A_i}^{B_{m+1}\dots B_{k-1}A_i)}\nonumber\\
={}&\tfrac{m}{ik}\epsilon_{A_i(A_1\dots A_{m-1}}^{A_i(B_1\dots B_{m-1}}(\phi\overset{m,0}{\odot}\varphi){}_{A_{m}\dots A_{i-1})}^{B_{m}\dots B_{k-1})}
+\tfrac{m(m-1)}{ik}\epsilon_{A_i(A_1\dots A_{m-1}}^{(B_1|A_i|\dots B_{m-1}}(\phi\overset{m,0}{\odot}\varphi){}_{A_{m}\dots A_{i+1})}^{B_{m}\dots B_{k-1})}\nonumber\\
&+\tfrac{m(k-m)}{ik}\epsilon_{A_i(A_1\dots A_{m-1}}^{(B_1\dots B_m}(\phi\overset{m,0}{\odot}\varphi){}_{A_{m}\dots A_{i-1})}^{B_{m+1}\dots B_{k-1})A_i}
+\tfrac{(i-m)m}{ik}\epsilon_{(A_1\dots A_{m}}^{A_i(B_1\dots B_{m-1}}(\phi\overset{m,0}{\odot}\varphi){}_{A_{m+1}\dots A_{i-1})A_i}^{B_{m}\dots B_{k-1})}\nonumber\\
={}&\tfrac{m(i+k-m+1)}{ik}
\epsilon_{(A_1\dots A_{m-1}}^{(B_1\dots B_{m-1}}(\phi\overset{m,0}{\odot}\varphi){}_{A_{m}\dots A_{i-1})}^{B_{m}\dots B_{k-1})}.
\end{align}
Recursively for $p\leq \min(i,k)$ traces we get
\begin{align}
&\hspace{-3ex}\epsilon_{(A_1\dots A_m}^{(B_1\dots B_m}(\phi\overset{m,0}{\odot}\varphi){}_{A_{m+1}\dots A_i)}^{B_{m+1}\dots B_{k-p}A_{i-p+1}\dots A_i)}\nonumber\\
={}&\tfrac{m(i+k-m+1)}{ik}
\epsilon_{(A_1\dots A_{m-1}}^{(B_1\dots B_{m-1}}(\phi\overset{m,0}{\odot}\varphi){}_{A_{m}\dots A_{i-1})}^{B_{m}\dots  B_{k-p}A_{i-p+1}\dots A_{i-1})}\nonumber\\
={}&\tfrac{m(i+k-m+1)}{ik}\tfrac{(m-1)(i+k-m)}{(i-1)(k-1)}
\epsilon_{(A_1\dots A_{m-2}}^{(B_1 \dots B_{m-2}}(\phi\overset{m,0}{\odot}\varphi){}_{A_{m-1}\dots A_{i-2})}^{B_{m-1}\dots  B_{k-p}A_{i-p+1}\dots A_{i-2})}\nonumber\\
={}&
\epsilon_{(A_1\dots A_{m-p}}^{(B_1\dots B_{m-p}}(\phi\overset{m,0}{\odot}\varphi){}_{A_{m-p+1}\dots A_{i-p})}^{B_{m-p+1}\dots  B_{k-p})}\prod_{q=0}^{p-1}\tfrac{(m-q)(i+k-m+1-q)}{(i-q)(k-q)}.
\nonumber\\
={}&
\epsilon_{(A_1\dots A_{m-p}}^{(B_1\dots B_{m-p}}(\phi\overset{m,0}{\odot}\varphi){}_{A_{m-p+1}\dots A_{i-p})}^{B_{m-p+1}\dots  B_{k-p})}
\frac{\binom{1 + i + k -  m}{p} \binom{m}{p}}{\binom{i}{p} \binom{k}{p}}
\end{align}
Taking $p\leq \min(i,k)$ traces in \eqref{eq:IrrDecAnsatz} gives
\begin{align}
&\hspace{-3ex}\phi_{A_1\dots A_{i-p}}^{C_1\dots C_p}\varphi^{B_1\dots B_{k-p}}_{C_1\dots C_p}\nonumber\\
={}&(-1)^p\phi_{A_1\dots A_i}\varphi^{B_1\dots B_{k-p}A_{i-p+1}\dots A_i}\nonumber\\
={}&(-1)^p\sum_{m=0}^{\min(i,k)} c_m \epsilon_{(A_1\dots A_m}^{(B_1\dots B_m}(\phi\overset{m,0}{\odot}\varphi){}_{A_{m+1}\dots A_i)}^{B_{m+1}B_{k-p}A_{i-p+1}\dots A_i)}\nonumber\\
={}&(-1)^p\sum_{m=p}^{\min(i,k)}  c_m\frac{\binom{i + k -  m +1}{p} \binom{m}{p}}{\binom{i}{p} \binom{k}{p}}\epsilon_{(A_1\dots A_{m-p}}^{(B_1\dots B_{m-p}}(\phi\overset{m,0}{\odot}\varphi){}_{A_{m-p+1}\dots A_{i-p})}^{B_{m-p+1}\dots  B_{k-p})}.
\end{align}
With $m<p$, we get at least one contraction of the symmetric spinor $(\phi\overset{m,0}{\odot}\varphi)$ and the term drops out.
If we symmetrize over all free indices, only the $m=p$ term survives, and we get
\begin{align}
c_m =\frac{(-1)^m\binom{i}{m} \binom{k}{m}}{\binom{i + k -  m +1}{m}}.
\end{align}
Hence
\begin{align}
&\hspace{-3ex}\phi_{A_1\dots A_{i-p}}^{C_1\dots C_p}\varphi^{B_1\dots B_{k-p}}_{C_1\dots C_p}\nonumber\\
={}&(-1)^p\sum_{m=p}^i  \frac{(-1)^m\binom{i}{m} \binom{k}{m}}{\binom{i + k -  m + 1}{m}}\frac{\binom{i + k -  m +1}{p} \binom{m}{p}}{\binom{i}{p} \binom{k}{p}}\epsilon_{(A_1\dots A_{m-p}}^{(B_1\dots B_{m-p}}(\phi\overset{m,0}{\odot}\varphi){}_{A_{m-p+1}\dots A_{i-p})}^{B_{m-p+1}\dots  B_{k-p})}\nonumber\\
={}&(-1)^p\sum_{m=p}^{\min(i,k)} \frac{(-1)^m\binom{i -  p}{m -  p} \binom{k -  p}{m -  p}}{\binom{i + k -  m -  p +1}{m -  p}}\epsilon_{(A_1\dots A_{m-p}}^{(B_1\dots B_{m-p}}(\phi\overset{m,0}{\odot}\varphi){}_{A_{m-p+1}\dots A_{i-p})}^{B_{m-p+1}\dots  B_{k-p})}.
\end{align}
By complex conjugation we get the corresponding decomposition for the primed indices. 
\end{proof}

\subsection{Proof of Theorem~\ref{thm:SymProdProperties}}
To proof the main theorem and in particular \eqref{eq:NonAssociative}, we need the following intermediate identities. We restrict to unprimed indices, as the effect of primed indices can be superimposed.
We begin with a partial expansion of symmetrization of $B$ indices.
\begin{proposition}\label{prop:PartialSymMultExpansion}
Let $\omega \in \SymSpace_{r,0}, \varphi \in \mathcal{S}_{k,0}$. We have the partial expansion
\begin{align}
\label{eq:PartialSymMultExpansion}
&\hspace{-3ex}(\omega\overset{m,0}{\odot}\varphi){}_{A_1\dots A_{k+r-2m-t}B_1\dots B_t}\nonumber\\
={}&\sum_{p=0}^{t}\frac{\binom{k -  m}{p} \binom{r - m}{t-p}}{\binom{k - 2 m + r}{t}}
 \omega_{B_{p+1}\cdots B_{t}(A_1\dots A_{r-m-t+p}}^{C_1\dots C_m}\varphi_{A_{r-m-t+p+1}\dots A_{k+r-2m-t})B_1\dots B_pC_1\dots C_m}.
\end{align}
The sum can be limited to the range $\max(0,t+m-r)\leq p\leq \min(t,k-m)$.
\end{proposition}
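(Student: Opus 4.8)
The plan is to prove the identity directly from Definition~\ref{def:SymProd} by a purely combinatorial computation, treating the $m$ contracted indices $C_1\dots C_m$ as passive spectators that ride along unchanged: they are contracted on both sides and play no role in the count. Writing $N=k+r-2m$ for the total number of free unprimed slots, $a=r-m$ for the number carried by $\omega$ and $b=k-m$ for the number carried by $\varphi$ (so that $a+b=N$), the starting point is to rewrite the defining total symmetrization of $\omega\overset{m,0}{\odot}\varphi$ over its $N$ free slots as a normalized sum over index distributions. Since $\omega$ and $\varphi$ are each symmetric, only the partition of the $N$ slots into the $a$ slots landing on $\omega$ and the $b$ slots landing on $\varphi$ matters, so
\begin{align}
(\omega\overset{m,0}{\odot}\varphi)_{X_1\dots X_N}
={}&\binom{N}{a}^{-1}\sum_{\substack{I\subseteq\{1,\dots,N\}\\ |I|=a}} \omega_{\{X_i:\,i\in I\}}{}^{C_1\dots C_m}\varphi_{\{X_i:\,i\in I^c\}C_1\dots C_m}.
\end{align}
This identity---a total symmetrization equals the average over block-distributions when the summand is already symmetric within each block---is the elementary engine of the whole argument.

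Next I would single out the $t$ indices $B_1,\dots,B_t$ (i.e. set $X_{N-t+j}=B_j$) and classify the subsets $I$ above by the number $p$ of $B$-slots sent to $\varphi$, the remaining $t-p$ being sent to $\omega$. For a fixed such class, $\omega$ must be completed by $a-(t-p)$ of the $N-t$ remaining $A$-slots and $\varphi$ by the complementary $b-p$ of them; summing over these completions reconstitutes exactly the symmetrization over $A_1,\dots,A_{N-t}$ displayed on the right-hand side, producing $\binom{N-t}{a-t+p}$ copies of the symmetrized term. Since $\omega\overset{m,0}{\odot}\varphi$ is totally symmetric, we may symmetrize the right-hand side over $B_1,\dots,B_t$; the $\binom{t}{p}$ distinct choices of which $B$-indices are sent to $\varphi$ then contribute equally and collapse to the single canonical representative with multiplicity $\binom{t}{p}$. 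This gives the overall weight $\binom{N}{a}^{-1}\binom{N-t}{a-t+p}\binom{t}{p}$ for the $p$-term.

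The remaining step is then a single binomial simplification, namely checking that
\begin{align}
\binom{N}{a}^{-1}\binom{N-t}{a-t+p}\binom{t}{p}
={}&\frac{\binom{k-m}{p}\binom{r-m}{t-p}}{\binom{k-2m+r}{t}},
\end{align}
which follows by writing both sides in factorials with $a=r-m$, $b=k-m$, $N=a+b$. The requirement that each factor carry a non-negative number of indices, $a-(t-p)\geq 0$ and $b-p\geq 0$, yields exactly the stated range $\max(0,t+m-r)\leq p\leq\min(t,k-m)$, outside which the binomial coefficients vanish anyway. As consistency checks I would verify the base case $t=0$, where the sum collapses to the single $p=0$ term and reproduces Definition~\ref{def:SymProd}, and observe that the $t=1$ case is just the elementary rule that a symmetrized index lands on $\omega$ or on $\varphi$ with weights proportional to $a$ and $b$---the same manipulation already used in the proof of Lemma~\ref{lem:IrrDecProduct}. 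An essentially equivalent route is an induction on $t$ in which one peels off the $B$-indices one at a time using this $t=1$ rule, the only real work there being to solve the resulting Pascal-type recursion, which again delivers the binomial product above.

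The step I expect to be the main obstacle is the bookkeeping in the second paragraph: correctly counting the multiplicity with which each symmetrized term is generated, and correctly accounting for the symmetry of $\omega\overset{m,0}{\odot}\varphi$ in the $B$-indices so that the factor $\binom{t}{p}$ is neither double-counted nor dropped. Once the distribution over the two factors is organized cleanly, everything else is either the elementary symmetrization identity or the routine factorial manipulation above.
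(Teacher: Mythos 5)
Your proposal is correct and takes essentially the same approach as the paper: the paper's proof likewise expands the total symmetrization according to how the $B$-indices distribute between $\omega$ and $\varphi$, only it peels them off one at a time ($B_t, B_{t-1},\dots$) --- the ``equivalent route'' you mention --- before collecting the $\binom{t}{p}$ equal contributions and simplifying the same factorial ratio to $\binom{k-m}{p}\binom{r-m}{t-p}/\binom{k-2m+r}{t}$. One shared subtlety, which is not a gap relative to the paper: collapsing the $\binom{t}{p}$ different placements of the $B$-indices onto the single canonical representative is valid only modulo symmetrization over $B_1,\dots,B_t$ (terms with different placements differ by pieces proportional to $\epsilon_{B_iB_j}$), an implicit convention that the paper's own statement and proof also use, and which is harmless because the proposition is only ever applied with the $B$-indices symmetrized or contracted against symmetric spinors.
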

\begin{proof}
Partial expansion of the symmetry for the indices $B_t, B_{t-1}, \dots B_1$ gives
\begin{align}
&\hspace{-3ex}(\omega\overset{m,0}{\odot}\varphi){}_{A_1\dots A_{k+r-2m-t}B_1\dots B_t}\nonumber\\
={}&\tfrac{r-m}{k+r-2m}\omega_{B_t(A_1\dots A_{r-m-1}}^{C_1\dots C_m}\varphi_{A_{r-m}\dots A_{k+r-2m-t}B_1\dots B_{t-1})C_1\dots C_m}\nonumber\\
&+\tfrac{k-m}{k+r-2m}\omega_{(A_1\dots A_{r-m}}^{C_1\dots C_m}\varphi_{A_{r-m+1}\dots A_{k+r-2m-t}B_1\dots B_{t-1})B_tC_1\dots C_m}\nonumber\\
={}&\sum_{p=0}^{t}\Bigl(\binom{t}{p}\tfrac{(r-m)!(k-m)!(k+r-2m-t)!}{(r-m-t+p)!(k-m-p)!(k+r-2m)!}\nonumber\\
&\times \omega_{B_{p+1}\cdots B_{t}(A_1\dots A_{r-m-t+p}}^{C_1\dots C_m}\varphi_{A_{r-m-t+p+1}\dots A_{k+r-2m-t})B_1\dots B_pC_1\dots C_m}\Bigr),
\end{align}
which can be simplified to \eqref{eq:PartialSymMultExpansion}.
\end{proof}

We aso need to make an irreducible decomposition of a product of two spinors with some contractions and symmetrizations.
\begin{proposition}\label{prop:PropEight}
Let $\phi \in \SymSpace_{i,0}, \varphi \in \mathcal{S}_{k,0}$.
\begin{align}
&\hspace{-3ex}\phi^{C_{1}\dots C_p(B_1\dots B_{t-p}}_{(A_1\dots A_{i-t}}\varphi_{A_{i-t+1}\dots A_{i+k-t-m-p})C_1\dots C_p}^{B_{t-p+1}\dots B_{t-p+m})}\nonumber\\
={}&\negmedspace\sum_{M=0}^{\min(i,k)-p} \sum_{q=0}^M\frac{(-1)^{q+M}\binom{m}{M -  q} \binom{k -  m -  p}{q} \binom{i -  t}{M -  q} \binom{t - p}{q}}{\binom{i + k -  M - 2p + 1}{M}\binom{M}{q} }
\epsilon_{(A_1\dots A_{M}}^{(B_1\dots B_M}(\phi\overset{M+p,0}{\odot}\varphi){}_{A_{M+1}\dots A_{i+k-t-m-p})}^{B_{M+1}\dots  B_{t-p+m})}
\label{eq:PropEight}
\end{align}
\end{proposition}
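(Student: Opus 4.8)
The plan is to reduce the statement to the fully symmetric decomposition of Lemma~\ref{lem:IrrDecProduct} and to fix the coefficients by a combinatorial matching, using the partial symmetrization expansion of Proposition~\ref{prop:PartialSymMultExpansion} to bring the two symmetrizations into a common form. The starting observation is that nothing on the left-hand side distinguishes permutations within the lower free indices $A_1\dots A_{i+k-t-m-p}$, nor within the upper free indices $B_1\dots B_{t-p+m}$, so the left-hand side is a spinor symmetric separately in each of these two groups.

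First I would pin down the shape of the right-hand side. By the irreducible decomposition \cite[Prop 3.3.54]{PR1}, invoked already for the ansatz \eqref{eq:IrrDecAnsatz}, any spinor with this $\Sigma\times\Sigma$ symmetry is a linear combination of terms in which $M$ lower indices are tied to $M$ upper indices by factors $\epsilon_A{}^B$, times a totally symmetric spinor carrying the remaining $i+k-2p-2M$ indices. Counting valences, that symmetric factor must be the suitably raised form of $\phi\overset{M+p,0}{\odot}\varphi$, and the range of $M$ is $0,\dots,\min(i,k)-p$ (equivalently $M+p$ runs over the range of Lemma~\ref{lem:IrrDecProduct}). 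This fixes the form of the right-hand side and leaves only the scalar coefficients open.

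To compute those coefficients I would note that the only difference from Lemma~\ref{lem:IrrDecProduct} with $M+p$ contractions is the origin of the free indices: $t-p$ of the upper $B$'s and $i-t$ of the lower $A$'s are carried by $\phi$, while $m$ of the $B$'s and $k-m-p$ of the $A$'s are carried by $\varphi$. Applying Proposition~\ref{prop:PartialSymMultExpansion} to re-expand the symmetrizations so that all $M+p$ contractions sit together, one reads off the coefficient of the left-hand index pattern. The denominator $\binom{i+k-M-2p+1}{M}$ and the sign $(-1)^{M}$ come directly from $c_{M+p}$ in Lemma~\ref{lem:IrrDecProduct}. The remaining index $q$ then records how the $M$ pairs on the $\epsilon$ factors distribute between the two factors: $q$ of them join an upper index of $\phi$ to a lower index of $\varphi$ (giving $\binom{t-p}{q}\binom{k-m-p}{q}$ choices) and $M-q$ join a lower index of $\phi$ to an upper index of $\varphi$ (giving $\binom{i-t}{M-q}\binom{m}{M-q}$ choices), which is exactly the product of binomials in the summand.

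I expect the main obstacle to be this last step. Because the two symmetrizations overlap on the shared $\epsilon$ legs, the index $q$ is common to both ends and the sum does not close by Vandermonde, so the coefficient is genuinely a double binomial sum rather than a product of two one-dimensional expansions. The accompanying weight $(-1)^{q}/\binom{M}{q}$ must be produced by the antisymmetry and the reordering of the $\epsilon$ factors when the raised indices are reconciled with the two symmetrizations, and getting this sign and multiplicity right—ultimately a verification of the stated summand as a binomial identity—is where the real work lies. Everything outside of this bookkeeping is routine rearrangement of the sums.
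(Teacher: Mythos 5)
Your outline is in fact the same route the paper takes: start from Lemma~\ref{lem:IrrDecProduct} applied to the product with $p$ contractions, expand the symmetrizations, and count how the surviving $\epsilon$ legs distribute between the indices carried by $\phi$ and those carried by $\varphi$. Your structural claims are all correct — the range of $M$, the sign $(-1)^M$ and denominator $\binom{i+k-M-2p+1}{M}$ inherited from the coefficient of Lemma~\ref{lem:IrrDecProduct}, and the reading of $\binom{t-p}{q}\binom{k-m-p}{q}$ and $\binom{i-t}{M-q}\binom{m}{M-q}$ as counting the two types of cross pairings. But what you have written is a plan, not a proof: the step you yourself flag as ``where the real work lies'' — producing the weight $(-1)^q/\binom{M}{q}$ — is precisely the computational core of the proposition, and it is left undone. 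An ansatz-plus-matching argument cannot determine that weight without performing a computation equivalent to the one you are trying to avoid, since for each $M$ the unknown coefficient is exactly the $q$-sum in question.

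Concretely, here is what is missing and how the paper supplies it. One introduces the equivalence $\bumpeq$ (equal after lowering the $A$'s, raising the $B$'s, and symmetrizing the two groups separately) and observes that $\epsilon_{A_i}{}^{A_j}\bumpeq 0$ and $\epsilon_{B_i}{}^{B_j}\bumpeq 0$; this is what justifies your claim that only cross-group $\epsilon$'s contribute. Then one performs a recursive partial expansion of the nested symmetrizations (in the spirit of Proposition~\ref{prop:PartialSymMultExpansion}, but carried out on the decomposition itself), which produces for each $(M,q)$ the factor $\binom{M}{q}\tfrac{(t-p)!\,(k-m-p)!\,(i-t)!\,m!\,(i-p-M)!\,(k-p-M)!}{(t-p-q)!\,(k-m-p-q)!\,(i-t-M+q)!\,(m-M+q)!\,(i-p)!\,(k-p)!}$. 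Rewriting this in binomials yields simultaneously your four numerator binomials, the cancellation of the $\binom{i-p}{M}\binom{k-p}{M}$ coming from Lemma~\ref{lem:IrrDecProduct}, \emph{and} the factor $1/\binom{M}{q}$ — so the latter is not a ``choice count'' at all but a normalization forced by the factorials of the symmetrization expansion. Finally, the $(-1)^q$ arises when the $q$ epsilons of type $\epsilon_{B}{}^{A}$ are converted to the form $\epsilon_{A}{}^{B}$ appearing in \eqref{eq:PropEight} by raising and lowering (the zee-zaw rule), confirming your guess about its origin but only after the bookkeeping is actually done. Until you carry out this expansion (or an equivalent projection computation isolating each $(M,q)$ term), the proposition is not proved.
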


\begin{proof}
Let $\bumpeq$ mean equal after lowering the $A$ indices, raising the $B$ indices and symmetrizing over the $A$ and $B$ index sets separately.
Using Lemma~\ref{lem:IrrDecProduct}, performing a partial expansion of the symmetries and noticing that $\epsilon_{A_i}^{A_j}\bumpeq 0$ and $\epsilon_{B_i}^{B_j}\bumpeq 0$ if $i\neq j$, we get
\begin{align}
&\hspace{-3ex}\phi^{C_{1}\dots C_p}_{A_1\dots A_{i-t}B_1\dots B_{t-p}}\varphi_{C_1\dots C_p}^{A_{i-t+1}\dots A_{i+k-t-m-p}B_{t-p+1}\dots B_{t-p+m}}\nonumber\\
={}&(-1)^{p}\negmedspace\sum_{M=0}^{\min(i,k)-p}\Bigl(
\frac{(-1)^{M+p}\binom{i - p}{M} \binom{k -  p}{M}}{\binom{i + k -  M - 2p + 1}{M}} \nonumber\\
&\times \epsilon_{(A_1\dots A_{M}}^{(A_{i-t+1}\dots A_{i-t+M}}(\phi\overset{M+p,0}{\odot}\varphi){}_{A_{M+1}\dots A_{i-t}B_1\dots B_{t-p})}^{A_{i-t+M+1}\dots  A_{i+k-t-m-p}B_{t-p+1}\dots B_{t-p+m})}
\Bigr)\nonumber\\
\bumpeq{}&\negmedspace\sum_{M=0}^{\min(i,k)-p}\frac{(-1)^{M}\binom{i - p}{M} \binom{k -  p}{M}}{\binom{i + k -  M - 2p + 1}{M}}\Bigl( \nonumber\\
&\tfrac{(t-p)(k-m-p)}{(i-p)(k-p)} \epsilon_{B_1(A_1\dots A_{M-1}}^{A_{i-t+1}(A_{i-t+2}\dots A_{i-t+M}}(\phi\overset{M+p,0}{\odot}\varphi){}_{A_{M}\dots A_{i-t}B_2\dots B_{t-p})}^{A_{i-t+M+1}\dots  A_{i+k-t-m-p}B_{t-p+1}\dots B_{t-p+m})}\nonumber\\
&+\tfrac{(i-t)m}{(i-p)(k-p)} \epsilon_{A_1(A_2\dots A_{M}}^{B_{t-p+1}(A_{i-t+1}\dots A_{i-t+M-1}}(\phi\overset{M+p,0}{\odot}\varphi){}_{A_{M+1}\dots A_{i-t}B_1\dots B_{t-p})}^{A_{i-t+M}\dots  A_{i+k-t-m-p}B_{t-p+2}\dots B_{t-p+m})}
\Bigr).
\end{align}
Repeatedly expanding, we find 
\begin{align}
&\hspace{-3ex}\phi^{C_{1}\dots C_p}_{A_1\dots A_{i-t}B_1\dots B_{t-p}}\varphi_{C_1\dots C_p}^{A_{i-t+1}\dots A_{i+k-t-m-p}B_{t-p+1}\dots B_{t-p+m}}\nonumber\\
\bumpeq{}&\negmedspace\sum_{M=0}^{\min(i,k)-p}\frac{(-1)^{M}\binom{i - p}{M} \binom{k -  p}{M}}{\binom{i + k -  M - 2p + 1}{M}}\Bigl(\sum_{q=0}^M\binom{M}{q}\tfrac{(t-p)!(k-m-p)!(i-t)!m!(i-p-M)!(k-p-M)!}{(t-p-q)!(k-m-p-q)!(i-t-M+q)!(m-M+q)!(i-p)!(k-p)!}
\nonumber\\
&\epsilon_{B_1\dots B_qA_1\dots A_{M-q}}^{A_{i-t+1}\dots A_{i-t+q}B_{t-p+1}\dots B_{t-p+M-q}}(\phi\overset{M+p,0}{\odot}\varphi){}_{A_{M-q+1}\dots A_{i-t}B_{q+1}\dots B_{t-p}}^{A_{i-t+q+1}\dots  A_{i+k-t-m-p}B_{t-p+M-q+1}\dots B_{t-p+m}}
\Bigr).
\end{align}
Moving the $A$ indices down and the $B$ indices up and writing out the symmetrizations, we get 
\begin{align}
&\hspace{-3ex}\phi^{C_{1}\dots C_p(B_1\dots B_{t-p}}_{(A_1\dots A_{i-t}}\varphi_{A_{i-t+1}\dots A_{i+k-t-m-p})C_1\dots C_p}^{B_{t-p+1}\dots B_{t-p+m})}\nonumber\\
={}&\negmedspace\sum_{M=0}^{\min(i,k)-p}\frac{(-1)^{M}\binom{i - p}{M} \binom{k -  p}{M}}{\binom{i + k -  M - 2p + 1}{M}}\Bigl( \sum_{q=0}^M\frac{(-1)^q\binom{m}{M -  q} \binom{k -  m -  p}{q} \binom{i -  t}{M -  q} \binom{t - p}{q}}{\binom{M}{q} \binom{i -  p}{M} \binom{k -  p}{M}}
\nonumber\\
&\epsilon_{(A_{i-t+1}\dots A_{i-t+q}A_1\dots A_{M-q}}^{(B_1\dots B_q B_{t-p+1}\dots B_{t-p+M-q}}(\phi\overset{M+p,0}{\odot}\varphi){}_{A_{M-q+1}\dots A_{i-t}A_{i-t+q+1}\dots  A_{i+k-t-m-p})}^{B_{q+1}\dots B_{t-p}B_{t-p+M-q+1}\dots B_{t-p+m})}
\Bigr).
\end{align}
After rearranging the indices, and simplifying, we get \eqref{eq:PropEight}.
\end{proof}

\begin{proof}[Proof of Theorem \ref{thm:SymProdProperties}]
Part~\ref{part:SymProdProp1} follows from the zee-zaw rule on the $m+n$ contracted indices and part~\ref{part:SymProdProp3} follows from complex conjugation of \eqref{eq:SymMultDef}.
Part~\ref{part:SymProdProp2} follows from the following argument.
Proposition~\ref{prop:PartialSymMultExpansion}, a renaming of the contracted indices and using the zee-zaw rule gives
\begin{align}
&\hspace{-3ex}(\phi\overset{t,0}{\odot}\omega\overset{m,0}{\odot}\varphi){}_{A_1\dots A_{i+k+r-2m-2t}}\nonumber\\
={}&\sum_{p=0}^{t}\frac{\binom{k -  m}{p} \binom{r - m}{t-p}}{\binom{k - 2 m + r}{t}}\nonumber\\
&\times  \omega_{B_{p+1}\cdots B_{t}(A_1\dots A_{r-m-t+p}}^{C_1\dots C_m}\phi^{B_1\dots B_t}_{A_{k+r-2m-t+1}\dots A_{i+k+r-2m-2t}}\varphi_{A_{r-m-t+p+1}\dots A_{k+r-2m-t})B_1\dots B_pC_1\dots C_m} \nonumber\\
={}&\sum_{p=0}^{t}(-1)^{m}\frac{\binom{k -  m}{p} \binom{r - m}{t-p}}{\binom{k - 2 m + r}{t}}\nonumber\\
&\times  \omega_{B_1\dots B_{m+t-p}(A_{i-t+k-m-p+1}\dots A_{i+k+r-2m-2t}}\phi^{C_1\dots C_p B_{1} B_{t-p}}_{A_{1}\dots A_{i-t}}\varphi_{A_{i-t+1}\dots A_{i-t+k-m-p}) C_1\dots C_p}^{B_{t-p+1}\dots B_{t-p+m}}.
\end{align} 
Using Proposition~\ref{prop:PropEight}, contracting the spin metrics, and using the zee-zaw rule, we get
\begin{align}
&\hspace{-3ex}(\phi\overset{t,0}{\odot}\omega\overset{m,0}{\odot}\varphi){}_{A_1\dots A_{i+k+r-2m-2t}}\nonumber\\
={}&\sum_{p=0}^{t}\sum_{M=0}^{\min(i,k)-p} \sum_{q=0}^M(-1)^{m}\frac{\binom{k -  m}{p} \binom{r - m}{t-p}}{\binom{k - 2 m + r}{t}}\frac{(-1)^{q+M}\binom{m}{M -  q} \binom{k -  m -  p}{q} \binom{i -  t}{M -  q} \binom{t - p}{q}}{\binom{i + k -  M - 2p + 1}{M}\binom{M}{q} }\nonumber\\
&\times  \omega_{B_1\dots B_{m+t-p}(A_{i-t+k-m-p+1}\dots A_{i+k+r-2m-2t}}
\epsilon_{A_1\dots A_{M}}^{B_1\dots B_M}(\phi\overset{M+p,0}{\odot}\varphi){}_{A_{M+1}\dots A_{i+k-t-m-p})}^{B_{M+1}\dots  B_{t-p+m}}\nonumber\\
={}&\sum_{p=0}^{t}\sum_{M=0}^{\min(i,k)-p} \sum_{q=0}^M(-1)^{m+q+M}\frac{\binom{k -  m}{p} \binom{r - m}{t-p}\binom{m}{M -  q} \binom{k -  m -  p}{q} \binom{i -  t}{M -  q} \binom{t - p}{q}}{\binom{k - 2 m + r}{t}\binom{i + k -  M - 2p + 1}{M}\binom{M}{q} }\nonumber\\
&\times  \omega_{B_{M+1}\dots B_{m+t-p}(A_1\dots A_{r+M+p-m-t}}
(\phi\overset{M+p,0}{\odot}\varphi){}_{A_{r+M+p-m-t+1}\dots A_{i+k+r-2m-2t})}^{B_{M+1}\dots  B_{t-p+m}}\nonumber\\
={}&\sum_{p=0}^{t}\sum_{M=0}^{\min(i,k)-p} \sum_{q=0}^M(-1)^{q+t-p}\frac{\binom{k -  m}{p} \binom{r - m}{t-p}\binom{m}{M -  q} \binom{k -  m -  p}{q} \binom{i -  t}{M -  q} \binom{t - p}{q}}{\binom{k - 2 m + r}{t}\binom{i + k -  M - 2p + 1}{M}\binom{M}{q} }\nonumber\\
&\times  \omega^{B_{M+1}\dots B_{m+t-p}}_{(A_1\dots A_{r+M+p-m-t}}
(\phi\overset{M+p,0}{\odot}\varphi){}_{A_{r+M+p-m-t+1}\dots A_{i+k+r-2m-2t})B_{M+1}\dots  B_{t-p+m}}.
\end{align} 
Hence
\begin{align}
&\hspace{-3ex}(\phi\overset{t,0}{\odot}\omega\overset{m,0}{\odot}\varphi)\nonumber\\
={}&\negmedspace\sum_{p=0}^{t}\negmedspace\sum_{M=0}^{\min(i,k)-p} \sum_{q=0}^M
\frac{(-1)^{t-p+q}\binom{k -  m}{p} \binom{r - m}{t-p}\binom{m}{M -  q} \binom{k -  m -  p}{q} \binom{i -  t}{M -  q} \binom{t - p}{q}}{\binom{k - 2 m + r}{t}\binom{i + k -  M - 2p + 1}{M}\binom{M}{q}}
(\omega\overset{t+m-p-M,0}{\odot}\phi\overset{M+p,0}{\odot}\varphi)\nonumber\\
={}&\negmedspace\sum_{p=0}^{t}\negmedspace\sum_{M=p}^{\min(i,k)} \sum_{q=0}^{M-p}
\frac{(-1)^{t-p+q}\binom{k -  m}{p} \binom{r - m}{t-p}\binom{m}{M - p -  q} \binom{k -  m -  p}{q} \binom{i -  t}{M - p -  q} \binom{t - p}{q}}{\binom{k - 2 m + r}{t}\binom{i + k -  M - p + 1}{M-p}\binom{M-p}{q}}
(\omega\overset{t+m-M,0}{\odot}\phi\overset{M,0}{\odot}\varphi)\nonumber\\
={}&\negmedspace\sum_{M=0}^{\min(i,k)}\negmedspace F_{i,r,k}^{t,m,M}
(\omega\overset{t+m-M,0}{\odot}\phi\overset{M,0}{\odot}\varphi),
\end{align}
where we have made the change $M\rightarrow M-p$ and re-ordered the sums.
The limits can be restricted to $\max(0, m -  r + t) \leq p \leq \min(k -  m, M, t)$ and $\max(0, M - m -  p, M - i -  p + t) \leq q \leq \min(k -  m -  p, M -  p, t - p)$ because the terms are zero outside this range.
The treatment of the primed indices is completely analogous. 
\end{proof}

\subsection{Derivatives} \label{sec:Derivatives}
In \cite{AndBaeBlu14a}, the irreducible decomposition of the covariant derivative of a symmetric spinor was done in terms of fundamental spinor operators. 
By extending the symmetric product to the space of linear, symmetric differential operators of valence $(k,l)$, $\mathcal{O}_{k,l}$, we can express the fundamental spinor operators in a compact way.

\begin{remark}
For $\nabla \in \mathcal{O}_{1,1}$ we have the fundamental spinor operators \cite[Definition 13]{AndBaeBlu14a}
\begin{align}  \label{eq:FundSpinOps} 
\sDiv \varphi ={}& \nabla \overset{1,1}{\odot}\varphi, &&&
\sCurl \varphi ={}& \nabla \overset{0,1}{\odot}\varphi, &&&
\sCurlDagger \varphi ={}& \nabla \overset{1,0}{\odot}\varphi, &&&
\sTwist \varphi ={}& \nabla \overset{0,0}{\odot}\varphi.
\end{align}
\end{remark}

On $\varphi \in \mathcal{S}_{k,l}$ we have the irreducible decomposition of the covariant derivative into fundamental operators \cite[Lemma 15]{AndBaeBlu14a},
\begin{align}
\nabla_{A_1}{}^{A_1'}\varphi{}_{A_2\dots A_{k+1}}{}^{A_2'\dots A_{l+1}'}={}&
(\sTwist\varphi){}_{A_1\dots A_{k+1}}{}^{A_1'\dots A_{l+1}'}\nonumber\\
&-\tfrac{l}{l+1}\bar\epsilon^{A_1'(A_2'}(\sCurl\varphi){}_{A_1\dots A_{k+1}}{}^{A_3'\dots A_{l+1}')}\nonumber\\
&-\tfrac{k}{k+1}\epsilon_{A_1(A_2}(\sCurlDagger\varphi){}_{A_3\dots A_{k+1})}{}^{A_1'\dots A_{l+1}'}\nonumber\\
&+\tfrac{kl}{(k+1)(l+1)}\epsilon_{A_1(A_2}\bar\epsilon^{A_1'(A_2'}(\sDiv\varphi){}_{A_3\dots A_{k+1})}{}^{A_3'\dots A_{l+1}')}.\label{eq:IrrDecGeneralDer}
\end{align}

Next, we write the commutators in the new notation.
Define the operator 
\begin{align}
\squareAB &= -(\nabla \overset{0,1}{\odot} \nabla) \in \mathcal{O}_{2,0},
\end{align}
and its complex conjugate $\overline{\squareAB} \in \mathcal{O}_{0,2}$.

In index notation, it reads
$\squareAB_{AB} = \nabla_{(A|A'|}\nabla_{B)}{}^{A'}$. Acting on $\varphi \in \mathcal{S}_{k,l}$ it can be expressed in terms of curvature via
\begin{subequations} 
\begin{align}
 \squareAB \overset{0,0}{\odot} \varphi ={}& - k \Psi \overset{1,0}{\odot} \varphi - l  \Phi \overset{0,1}{\odot} \varphi,\\
 \squareAB \overset{1,0}{\odot} \varphi ={}&  -(k-1) \Psi \overset{2,0}{\odot} \varphi - l  \Phi \overset{1,1}{\odot} \varphi + (k+2) \Lambda \overset{0,0}{\odot} \varphi, \\
 \squareAB \overset{2,0}{\odot} \varphi ={}& - (k-2) \Psi \overset{3,0}{\odot} \varphi - l  \Phi \overset{2,1}{\odot} \varphi.
\end{align}
\end{subequations} 
\begin{lemma}{\cite[Lemma 18]{AndBaeBlu14a}}\label{lemma:commutators}
Let $\varphi \in \mathcal{S}_{k,l}$. The operators $\sDiv$, $\sCurl$, $\sCurlDagger$ and $\sTwist$ satisfy the commutator relations
\begin{subequations}
\begin{align}
\sDiv \sCurl \varphi
 ={}&\tfrac{k}{k+1} \sCurl \sDiv \varphi - \overline{\squareAB} \overset{0,2}{\odot} \varphi,
 & k\geq 0, l\geq 2,  \label{eq:DivCurl}\\
\sDiv \sCurlDagger \varphi
 ={}&\tfrac{l}{l+1} \sCurlDagger \sDiv \varphi
- \squareAB \overset{2,0}{\odot} \varphi,
& k\geq 2, l\geq 0,\label{eq:DivCurlDagger}\\
\sCurl \sTwist \varphi
={}&\tfrac{l}{l+1} \sTwist \sCurl \varphi
- \squareAB \overset{0,0}{\odot} \varphi,
& k\geq 0, l\geq 0,\label{eq:CurlTwist}\\
\sCurlDagger \sTwist \varphi
={}&
\tfrac{k}{k+1} \sTwist \sCurlDagger \varphi
- \overline\squareAB \overset{0,0}{\odot} \varphi,
& k\geq 0, l\geq 0,\label{eq:CurlDaggerTwist}\\
\sDiv \sTwist \varphi
={}&
-(\tfrac{1}{k+1}+\tfrac{1}{l+1}) \sCurl \sCurlDagger \varphi
+\tfrac{l(l+2)}{(l+1)^2} \sTwist \sDiv \varphi 
-\tfrac{l+2}{l+1}\squareAB  \overset{1,0}{\odot} \varphi
-\tfrac{l}{l+1}\overline\squareAB  \overset{0,1}{\odot} \varphi,
 & k\geq 1, l\geq 0,\label{eq:DivTwistCurlCurlDagger}\\
\sDiv \sTwist \varphi
={}&
-(\tfrac{1}{k+1}+\tfrac{1}{l+1})\sCurlDagger \sCurl \varphi
+\tfrac{k(k+2)}{(k+1)^2}\sTwist \sDiv \varphi
-\tfrac{k}{k+1}\squareAB  \overset{1,0}{\odot} \varphi
-\tfrac{k+2}{k+1}\overline\squareAB  \overset{0,1}{\odot} \varphi,
& k\geq 0, l\geq 1,\label{eq:DivTwistCurlDaggerCurl}\\
\sCurl \sCurlDagger \varphi
={}&
\sCurlDagger \sCurl \varphi
+(\tfrac{1}{k+1}-\tfrac{1}{l+1})\sTwist \sDiv \varphi
-\squareAB  \overset{1,0}{\odot} \varphi
+\overline\squareAB  \overset{0,1}{\odot} \varphi,
& k\geq 1, l\geq 1.\label{eq:CurlCurlDagger}
\end{align}
\end{subequations}
\end{lemma}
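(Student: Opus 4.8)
The seven relations are structurally identical: each equates a composition $\mathcal{O}_1\mathcal{O}_2\varphi$ of two fundamental operators with the reversed composition $\mathcal{O}_2\mathcal{O}_1\varphi$, weighted by a rational factor, plus a curvature term built from $\squareAB$ or $\overline{\squareAB}$. The plan is to treat all of them uniformly. Using \eqref{eq:FundSpinOps} together with the right-associativity convention, every such composition is one and the same second-derivative object $\nabla\overset{a_1,b_1}{\odot}\nabla\overset{a_2,b_2}{\odot}\varphi$, reorganized by different symmetrizations and contractions of the two derivative index pairs. For instance, $\sCurl\sTwist\varphi=\nabla\overset{0,1}{\odot}\nabla\overset{0,0}{\odot}\varphi$ and $\sTwist\sCurl\varphi=\nabla\overset{0,0}{\odot}\nabla\overset{0,1}{\odot}\varphi$ differ only in how the two derivative indices are symmetrized and contracted against each other and against $\varphi$.

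First I would apply the non-associativity relation \eqref{eq:NonAssociative}, equivalently its corollary \eqref{eq:SymMultCommutator1}, to bring both orderings into a common normal form. Because $\nabla$ has valence $(1,1)$, the relevant associativity coefficients are $F^{t,m,M}_{i,r,k}$ and $F^{u,n,N}_{j,s,l}$ with $i=r=1$ and $j=s=1$; the double sums over $M,N$ then collapse to $M,N\in\{0,1\}$, and each surviving coefficient reduces to one of the elementary factors $\tfrac{k}{k+1}$, $\tfrac{l}{l+1}$, $\tfrac{k(k+2)}{(k+1)^2}$, $\tfrac{1}{k+1}-\tfrac{1}{l+1}$, and so on, that appear on the right-hand sides. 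This supplies the purely combinatorial part of each identity, namely the part that would be exact if the two copies of $\nabla$ were freely commuting symmetric spinors.

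The curvature enters precisely as the failure of that commuting assumption, and this is the conceptual heart of the argument. By graded anti-commutativity (Theorem~\ref{thm:SymProdProperties}, part~\ref{part:SymProdProp1}) a genuine symmetric spinor $\phi$ satisfies $\phi\overset{0,1}{\odot}\phi=\phi\overset{1,0}{\odot}\phi=0$, whereas the derivative operator does not commute with itself, and by definition $\nabla\overset{0,1}{\odot}\nabla=-\squareAB$ and $\nabla\overset{1,0}{\odot}\nabla=-\overline{\squareAB}$. Thus the terms of the re-expansion in which the two derivative factors are mutually contracted through an odd ($\overset{0,1}{\odot}$ or $\overset{1,0}{\odot}$) pattern are exactly the curvature obstructions, and they assemble into the $\squareAB$ and $\overline{\squareAB}$ products on the right-hand sides; which box appears, and with how many contractions against $\varphi$, is dictated by whether the mutual contraction is primed or unprimed. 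Collecting the combinatorial terms into $\mathcal{O}_2\mathcal{O}_1\varphi$ and the self-contraction terms into box products, and matching coefficients, yields each of \eqref{eq:DivCurl}--\eqref{eq:CurlCurlDagger}. If one wishes to make the right-hand sides fully explicit, the stated expansions of $\squareAB\overset{p,0}{\odot}\varphi$ in terms of $\Psi,\Phi,\Lambda$ may be substituted, but they are not needed for the form given here.

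The main obstacle is the clean separation of these two contributions. Since the non-associativity relation was derived for commuting symmetric spinors, applying it to $\nabla\odot\nabla\odot\varphi$ silently symmetrizes the two derivatives, so one must restore the antisymmetric-in-the-two-derivatives part by hand and verify that it is captured exactly, with the correct numerical coefficients and contraction pattern, by $\squareAB$ and $\overline{\squareAB}$. A secondary bookkeeping point is the valence truncation: the stated ranges, such as $k\geq 0,\,l\geq 2$ for \eqref{eq:DivCurl}, are exactly where $\min(i,k)$ and $\min(j,l)$ cut the sums, so these edge cases must be checked to ensure no spurious or missing terms arise.
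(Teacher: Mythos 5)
The paper offers no proof of this lemma at all: it is imported, citation and all, from \cite[Lemma 18]{AndBaeBlu14a}, where it is established by direct index computation. Your proposal therefore has to stand on its own, and as written it does not close. The conceptual picture is right --- for a genuine symmetric spinor $\phi\overset{0,1}{\odot}\phi=0$ by graded anti-commutativity, whereas $\nabla\overset{0,1}{\odot}\nabla=-\squareAB\neq 0$, so curvature measures exactly the failure of the two derivatives to commute --- but the step you explicitly defer (``restore the antisymmetric-in-the-two-derivatives part by hand and verify that it is captured exactly, with the correct numerical coefficients and contraction pattern'') is not a technical afterthought: it is the entire content of the lemma. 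The relations \eqref{eq:NonAssociative} and \eqref{eq:SymMultCommutator1} are proved via Lemma~\ref{lem:IrrDecProduct} and Propositions~\ref{prop:PartialSymMultExpansion} and \ref{prop:PropEight}, every step of which assumes the factors are commuting symmetric spinor fields; nothing licenses their application to the ordered composition $\nabla\overset{t,u}{\odot}\nabla\overset{m,n}{\odot}\varphi$, and the naive application is simply false (it yields the lemma with every box term deleted). A correct argument along your lines must first split $\nabla_{AA'}\nabla_{BB'}\varphi$ into its part symmetric over both index pairs plus the $\epsilon_{AB}$- and $\bar\epsilon_{A'B'}$-trace parts (which are precisely $\overline\squareAB$ and $\squareAB$ acting on $\varphi$), apply the commuting algebra only to the symmetric part, and track how the trace parts propagate through each contraction pattern; none of the resulting coefficients, e.g.\ $-\tfrac{l+2}{l+1}$ and $-\tfrac{l}{l+1}$ in \eqref{eq:DivTwistCurlCurlDagger}, are computed or even argued for in your sketch.

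There is also a structural error in how you describe the ``purely combinatorial part.'' Applying \eqref{eq:SymMultCommutator1} with $\phi=\omega=\nabla$ is not a rewrite of one composition into the other, because the right-hand side contains the left-hand side's own contraction pattern among its $(M,N)$ terms. Concretely, for \eqref{eq:DivCurl} one has $(t,u)=(1,1)$, $(m,n)=(0,1)$, the valence constraints force $N=1$, and the surviving coefficients are $F_{1,1,k}^{1,0,0}F_{1,1,l}^{1,1,1}=-\tfrac{1}{k+1}$ and $F_{1,1,k}^{1,0,1}F_{1,1,l}^{1,1,1}=\tfrac{k(k+2)}{(k+1)^2}$, so even in the flat approximation the identity reads $\sDiv \sCurl \varphi=-\tfrac{1}{k+1}\sDiv \sCurl \varphi+\tfrac{k(k+2)}{(k+1)^2}\sCurl \sDiv \varphi$, which must be \emph{solved} for $\sDiv\sCurl\varphi$ to produce the factor $\tfrac{k}{k+1}$; the coefficients are not read off from the $F$'s as you assert, and once curvature is switched on the box corrections enter this linear equation on both sides as well. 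Finally, the valence restrictions are not merely ``where $\min(i,k)$ and $\min(j,l)$ cut the sums'': the coefficient $F_{1,1,l}^{1,1,1}$ degenerates to $0/0$ at $l=1$, so the stated ranges mark genuine degenerations of the combinatorics that your truncation argument does not capture.
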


\begin{lemma}
For symmetric spinors $\phi \in \SymSpace_{i,j}, \varphi \in \mathcal{S}_{k,l}$ we have the following Leibniz rules.
\begin{subequations}
\label{eq:SymMultLeibniz}
\begin{align}
\sTwist (\phi {\overset{m,n}{\odot}}\varphi)={}&(-1)^{m + n}\varphi {\overset{m,n}{\odot}}\sTwist \phi
 + \tfrac{(-1)^{m + n} n}{j + 1}\varphi {\overset{m,n - 1}{\odot}}\sCurl \phi
 + \tfrac{(-1)^{m + n} m}{i + 1}\varphi {\overset{m - 1,n}{\odot}}\sCurlDagger \phi\nonumber\\
& + \tfrac{(-1)^{m + n} m n}{(i + 1) (j + 1)}\varphi {\overset{m - 1,n - 1}{\odot}}\sDiv \phi
 + \phi {\overset{m,n}{\odot}}\sTwist \varphi
 + \tfrac{n}{l + 1}\phi {\overset{m,n - 1}{\odot}}\sCurl \varphi\nonumber\\
& + \tfrac{m}{k + 1}\phi {\overset{m - 1,n}{\odot}}\sCurlDagger \varphi
 + \tfrac{m n}{k l + k + l + 1}\phi {\overset{m - 1,n - 1}{\odot}}\sDiv \varphi 
\label{TwistLeibniz},\\
\sCurl (\phi {\overset{m,n}{\odot}}\varphi)={}&\tfrac{(-1)^{m + n + 1} (l -  n)}{j + l - 2 n}\varphi {\overset{m,n + 1}{\odot}}\sTwist \phi
 + \tfrac{(-1)^{m + n} (j -  n) (j + l -  n + 1)}{(j + 1) (j + l - 2 n)}\varphi {\overset{m,n}{\odot}}\sCurl \phi\nonumber\\
& + \tfrac{(-1)^{m + n + 1} m (l -  n)}{(i + 1) (j + l - 2 n)}\varphi {\overset{m - 1,n + 1}{\odot}}\sCurlDagger \phi\nonumber\\
& + \tfrac{(-1)^{m + n} m (j -  n) (j + l -  n + 1)}{(i + 1) (j + 1) (j + l - 2 n)}\varphi {\overset{m - 1,n}{\odot}}\sDiv \phi
 -  \tfrac{j -  n}{j + l - 2 n}\phi {\overset{m,n + 1}{\odot}}\sTwist \varphi\nonumber\\
& + \tfrac{(l -  n) (j + l -  n + 1)}{(l + 1) (j + l - 2 n)}\phi {\overset{m,n}{\odot}}\sCurl \varphi
 + \tfrac{m (- j + n)}{(k + 1) (j + l - 2 n)}\phi {\overset{m - 1,n + 1}{\odot}}\sCurlDagger \varphi\nonumber\\
& + \tfrac{m (l -  n) (j + l -  n + 1)}{(k + 1) (l + 1) (j + l - 2 n)}\phi {\overset{m - 1,n}{\odot}}\sDiv \varphi 
\label{CurlLeibniz},\\
\sCurlDagger (\phi {\overset{m,n}{\odot}}\varphi)={}&\tfrac{(-1)^{m + n + 1} (k -  m)}{i + k - 2 m}\varphi {\overset{m + 1,n}{\odot}}\sTwist \phi
 + \tfrac{(-1)^{m + n + 1} n (k -  m)}{(j + 1) (i + k - 2 m)}\varphi {\overset{m + 1,n - 1}{\odot}}\sCurl \phi\nonumber\\
& + \tfrac{(-1)^{m + n} (i -  m) (i + k -  m + 1)}{(i + 1) (i + k - 2 m)}\varphi {\overset{m,n}{\odot}}\sCurlDagger \phi\nonumber\\
& + \tfrac{(-1)^{m + n} n (i -  m) (i + k -  m + 1)}{(i + 1) (j + 1) (i + k - 2 m)}\varphi {\overset{m,n - 1}{\odot}}\sDiv \phi
 -  \tfrac{i -  m}{i + k - 2 m}\phi {\overset{m + 1,n}{\odot}}\sTwist \varphi\nonumber\\
& + \tfrac{n (- i + m)}{(l + 1) (i + k - 2 m)}\phi {\overset{m + 1,n - 1}{\odot}}\sCurl \varphi
 + \tfrac{(k -  m) (i + k -  m + 1)}{(k + 1) (i + k - 2 m)}\phi {\overset{m,n}{\odot}}\sCurlDagger \varphi\nonumber\\
& + \tfrac{n (k -  m) (i + k -  m + 1)}{(k + 1) (l + 1) (i + k - 2 m)}\phi {\overset{m,n - 1}{\odot}}\sDiv \varphi 
\label{CurlDgLeibniz},\\
\sDiv (\phi {\overset{m,n}{\odot}}\varphi)={}&\tfrac{(-1)^{m + n} (k -  m) (l -  n)}{(i + k - 2 m) (j + l - 2 n)}\varphi {\overset{m + 1,n + 1}{\odot}}\sTwist \phi\nonumber\\
& + \tfrac{(-1)^{m + n + 1} (j -  n) (k -  m) (j + l -  n + 1)}{(j + 1) (i + k - 2 m) (j + l - 2 n)}\varphi {\overset{m + 1,n}{\odot}}\sCurl \phi\nonumber\\
& + \tfrac{(-1)^{m + n + 1} (i -  m) (l -  n) (i + k -  m + 1)}{(i + 1) (i + k - 2 m) (j + l - 2 n)}\varphi {\overset{m,n + 1}{\odot}}\sCurlDagger \phi\nonumber\\
& + \tfrac{(-1)^{m + n} (i -  m) (j -  n) (i + k -  m + 1) (j + l -  n + 1)}{(i + 1) (j + 1) (i + k - 2 m) (j + l - 2 n)}\varphi {\overset{m,n}{\odot}}\sDiv \phi\nonumber\\
& + \tfrac{(i -  m) (j -  n)}{(i + k - 2 m) (j + l - 2 n)}\phi {\overset{m + 1,n + 1}{\odot}}\sTwist \varphi\nonumber\\
& + \tfrac{(- i + m) (l -  n) (j + l -  n + 1)}{(l + 1) (i + k - 2 m) (j + l - 2 n)}\phi {\overset{m + 1,n}{\odot}}\sCurl \varphi\nonumber\\
& + \tfrac{(j -  n) (- k + m) (i + k -  m + 1)}{(k + 1) (i + k - 2 m) (j + l - 2 n)}\phi {\overset{m,n + 1}{\odot}}\sCurlDagger \varphi\nonumber\\
& + \tfrac{(k -  m) (l -  n) (i + k -  m + 1) (j + l -  n + 1)}{(k + 1) (l + 1) (i + k - 2 m) (j + l - 2 n)}\phi {\overset{m,n}{\odot}}\sDiv \varphi 
\label{DivLeibniz}.
\end{align}
\end{subequations}
\end{lemma}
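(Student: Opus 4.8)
\subsection*{Proof plan}

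The plan is to exploit two facts: first, that every fundamental operator is itself a symmetric product with the connection, $\mathcal{O}\psi=\nabla\overset{a,b}{\odot}\psi$ with $(a,b)$ equal to $(0,0),(0,1),(1,0),(1,1)$ for $\sTwist,\sCurl,\sCurlDagger,\sDiv$ respectively; and second, that $\nabla$ is a derivation. Since the Levi-Civita connection is metric compatible, $\nabla_{CC'}\epsilon_{AB}=0$, so $\nabla$ commutes with the contractions and symmetrizations defining $\overset{m,n}{\odot}$, and the ordinary Leibniz rule gives $\nabla_{CC'}(\phi\overset{m,n}{\odot}\varphi)=(\nabla_{CC'}\phi)\overset{m,n}{\odot}\varphi+\phi\overset{m,n}{\odot}(\nabla_{CC'}\varphi)$ at the level of the raw, un-decomposed derivative. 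Dressing this with $\overset{a,b}{\odot}$ and distributing is exactly what produces the two groups of four terms in each formula: a group in which $\nabla$ differentiates $\varphi$ (the terms containing $\sTwist\varphi,\sCurl\varphi,\sCurlDagger\varphi,\sDiv\varphi$) and a group in which it differentiates $\phi$ (those containing $\sTwist\phi,\dots,\sDiv\phi$).

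For the group where $\nabla$ acts on $\varphi$, I would substitute the irreducible decomposition \eqref{eq:IrrDecGeneralDer}, writing $\nabla\varphi$ as a $\sTwist\varphi$ piece together with $\epsilon$- and $\bar\epsilon$-dressed $\sCurl\varphi$, $\sCurlDagger\varphi$ and $\sDiv\varphi$ pieces. Feeding each piece into $\phi\overset{m,n}{\odot}(\,\cdot\,)$ and contracting the accompanying spin metrics against the symmetrization collapses the expression onto a single symmetric product: the derivative index is either left free, keeping the count at $(m,n)$ and giving $\phi\overset{m,n}{\odot}\sTwist\varphi$, or it is tied by an $\epsilon$ (respectively $\bar\epsilon$) into one of the $\phi$--$\varphi$ contractions, lowering the unprimed (respectively primed) count by one and yielding $\phi\overset{m-1,n}{\odot}\sCurlDagger\varphi$, $\phi\overset{m,n-1}{\odot}\sCurl\varphi$, or $\phi\overset{m-1,n-1}{\odot}\sDiv\varphi$. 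The rational coefficients are precisely the binomial ratios generated by these contractions, evaluated by the same trace recursion used in the proof of Lemma~\ref{lem:IrrDecProduct} and by the partial expansion of Proposition~\ref{prop:PartialSymMultExpansion}; Proposition~\ref{prop:PropEight} packages exactly this collapse. This is why, for instance, the $\sDiv\varphi$ coefficient $\tfrac{kl}{(k+1)(l+1)}$ of \eqref{eq:IrrDecGeneralDer} reappears in \eqref{TwistLeibniz} as $\tfrac{mn}{kl+k+l+1}$, the $k,l$ of the numerator being replaced by the contraction counts $m,n$.

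The group where $\nabla$ differentiates $\phi$ is handled identically, after first using graded anti-commutativity \eqref{eq:Commutativity} to move $\varphi$ to the left of the product, which converts $(\mathcal{O}'\phi)\overset{m,n}{\odot}\varphi$ into $(-1)^{m+n}\varphi\overset{m,n}{\odot}\mathcal{O}'\phi$. This reversal is the source of the overall signs $(-1)^{m+n}$, and of the shifted signs $(-1)^{m+n\pm1}$ once the contraction counts change; the coefficients follow from the same $\epsilon$-contraction bookkeeping applied now to the $\phi$ indices, so that the valences $i,j$ replace $k,l$ in the corresponding denominators. Throughout, the primed sector is obtained by superposition exactly as in the proof of Theorem~\ref{thm:SymProdProperties}, since the $\bar\epsilon$-contractions act independently of the unprimed $\epsilon$-contractions.

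The conceptual content is thus entirely carried by the derivation property together with the decomposition and associativity results already established; the main obstacle is purely combinatorial. One must verify that distributing the single pair of derivative indices over the free symmetrized slots and the contraction slots reproduces each rational coefficient—e.g. the factor $\tfrac{(l-n)(j+l-n+1)}{(l+1)(j+l-2n)}$ multiplying $\phi\overset{m,n}{\odot}\sCurl\varphi$ in \eqref{CurlLeibniz}—with the correct normalization. Keeping these binomial ratios straight across the four operators and the two sectors is the delicate part, and it is most safely executed with the computer-algebra implementation; no identities beyond those above are needed, since the four rules are structurally identical and differ only in the choice of $(a,b)$, that is, in which index of $\nabla$ is free and which is contracted.
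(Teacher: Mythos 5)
Your plan is correct, and its skeleton coincides with the paper's proof: both write the four rules collectively as $\nabla\overset{t,u}{\odot}(\phi\overset{m,n}{\odot}\varphi)$ with $t,u\in\{0,1\}$, split $\nabla=\nabla_\phi+\nabla_\varphi$ by the derivation property, and flip the $\nabla_\phi$ group using graded anti-commutativity \eqref{eq:Commutativity}, which is where all the $(-1)^{m+n}$ signs originate. Where you diverge is in how the derivative is then pushed onto the factor it differentiates. The paper simply invokes the ready-made commutation relation \eqref{eq:SymMultCommutator1} with one factor of valence $(1,1)$, so each group becomes
\begin{align*}
\sum_{M=0}^{1}\sum_{N=0}^{1} F_{1,i,k}^{t,m,M}F_{1,j,l}^{u,n,N}\,
\phi\overset{t+m-M,u+n-N}{\odot}\nabla\overset{M,N}{\odot}\varphi ,
\end{align*}
and the four fundamental operators appear automatically as $\nabla\overset{M,N}{\odot}\varphi$ with $M,N\in\{0,1\}$; the proof then reduces to evaluating finitely many $F$ coefficients. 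You instead substitute the irreducible decomposition \eqref{eq:IrrDecGeneralDer} of $\nabla\varphi$ (respectively $\nabla\phi$) and collapse the resulting $\epsilon$- and $\bar\epsilon$-contractions through the symmetrizations via Propositions~\ref{prop:PartialSymMultExpansion} and~\ref{prop:PropEight}, which in effect re-derives the valence-$(1,1)$ special case of \eqref{eq:SymMultCommutator1} rather than quoting it. Both routes are sound: the paper's is shorter because the combinatorial work was done once and for all in proving Theorem~\ref{thm:SymProdProperties}, while yours repeats part of that work but makes the origin of each coefficient transparent (your observation that the coefficient $\tfrac{kl}{(k+1)(l+1)}$ in \eqref{eq:IrrDecGeneralDer} reappears as $\tfrac{mn}{(k+1)(l+1)}$ in \eqref{TwistLeibniz} is exactly of this kind). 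Your closing deferral of the coefficient bookkeeping to explicit (computer-algebra) computation is no weaker than the paper's own final step, which likewise ends with ``explicit calculations of the $F_{1,i,k}^{t,m,M}$ coefficients''.
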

\begin{proof}
Collectively, the left hand sides can be written as $\nabla\overset{t,u}{\odot}(\phi\overset{m,n}{\odot}\varphi)$ where $t, u\in \{0,1\}$.
Let $\nabla_\phi$ and $\nabla_\varphi$ be $\nabla$ only differentiating $\phi$ respectively $\varphi$. From the relations \eqref{eq:Commutativity} and \eqref{eq:SymMultCommutator1} we get
\begin{align}
\nabla\overset{t,u}{\odot}(\phi\overset{m,n}{\odot}\varphi)
={}&\nabla_\phi\overset{t,u}{\odot}(\phi\overset{m,n}{\odot}\varphi)+\nabla_\varphi\overset{t,u}{\odot}(\phi\overset{m,n}{\odot}\varphi)\nonumber\\
={}&(-1)^{m+n}\nabla_\phi\overset{t,u}{\odot}(\varphi\overset{m,n}{\odot}\phi)+\nabla_\varphi\overset{t,u}{\odot}(\phi\overset{m,n}{\odot}\varphi)\nonumber\\
={}&(-1)^{m+n}\sum_{M=0}^{1}\sum_{N=0}^{1} F_{1,k,i}^{t,m,M}F_{1, l, j}^{u, n, N}
\varphi\overset{t+m-M,u+n-N}{\odot}\nabla\overset{M,N}{\odot}\phi\nonumber\\
&+\sum_{M=0}^{1}\sum_{N=0}^{1} F_{1,i,k}^{t,m,M}F_{1, j, l}^{u, n, N}
\phi\overset{t+m-M,u+n-N}{\odot}\nabla\overset{M,N}{\odot}\varphi.
\end{align}
Explicit calculations of the $F_{1,i,k}^{t,m,M}$ coefficients gives the relations \eqref{eq:SymMultLeibniz}.
\end{proof}

\subsection{GHP expansion}\label{sec:GHP}

In this section we collect equations to efficiently expand symmetric spinorial equations into GHP components. Let us first briefly review the formalism, see \cite{GHP} for details. Introducing a normalized spinor dyad $(o_A, \iota_A), o_A \iota^A=1$, a two dimensional subgroup of the Lorentz group is given by
\begin{align} \label{GHPDyadTrafo}
o_A \to \lambda o_A, \qquad \iota_A \to \lambda^{-1} \iota_A,
\end{align}
with non-vanishing, complex scalar field $\lambda$. A field $\phi$ is said to be of GHP weight $\{p,q\}$ if it transforms via
\begin{align}
 \phi \to \lambda^p \bar \lambda^q \phi
\end{align}
under \eqref{GHPDyadTrafo} and its complex conjugate. The Levi-Civita connection has a natural lift of the form
\begin{align} \label{ThetaDef}
\Theta_{AA'} = \nabla_{AA'} - p \omega_{AA'} - q \bar{\omega}_{AA'}, \qquad \text{with } \omega_{AA'} = \iota^B \nabla_{AA'} o_B,
\end{align}
and is of weight zero in the sense that it maps $\{p,q\}$ weighted fields to $\{p,q\}$ fields. The GHP operators are given by the dyad expansion of \eqref{ThetaDef},
\begin{align} \label{ThetaDyad}
\Theta_{AA'} ={}& \iota_A \bar \iota_{A'} \tho - \iota_A \bar o_{A'} \edt - o_A \bar \iota_{A'} \edt' + o_A \bar o _{A'} \tho'.
\end{align}
The connection coefficients are defined as follows,
\begin{subequations} \label{GammaDef}
\begin{align}
\Theta_{AA'} o_B ={}& \Gamma_{AA'} \iota_B, \qquad \text{where }
\Gamma_{AA'} = -\iota_A \bar \iota_{A'} \kappa + \iota_A \bar o_{A'} \sigma + o_A \bar \iota_{A'} \rho - o_A \bar o _{A'} \tau, \\
\Theta_{AA'} \iota_B ={}& \Gamma'_{AA'} o_B, \qquad \text{where } 
\Gamma'_{AA'} = -\iota_A \bar \iota_{A'} \tau' + \iota_A \bar o_{A'} \rho' + o_A \bar \iota_{A'} \sigma' - o_A \bar o _{A'} \kappa'.
\end{align}
\end{subequations}

To express the dyad expansion of a general symmetric spinor, it is convenient to define a symmetric spinor basis $\SymDyadBasis{n}{k}{m}{l}$ of weight $\{2n-k,2m-l\}$ by
\begin{align}
\SymDyadBasis{n}{k}{m}{l}{}_{A_1\dots A_k}^{A'_1\dots A'_l} &= o_{(A_1} \dots o_{A_n} \iota_{A_{n+1}} \dots \iota_{A_k)} \bar o^{(A'_1} \dots \bar o^{A'_m} \bar \iota^{A'_{m+1}} \dots \bar \iota^{A'_l)}.
\end{align}
In particular this allows us to mostly avoid spinor indices for the rest of this section. For a full contraction of two basis elements we find
\begin{align} \label{BasisFullContraction}
\SymDyadBasis{n}{k}{m}{l} \overset{k,l}{\odot} \SymDyadBasis{i}{k}{j}{l} = 
(-1)^{(n+m)} \binom{k}{n}^{-1} \binom{l}{m}^{-1} \delta^{n}_{k-i} \delta^m_{l-j},
\end{align}
where $\delta^a_b=1$ if $a=b$ and zero otherwise. Now any $\phi \in \SymSpace_{k,l}$ can be expanded into
\begin{align} \label{eq:phiklGHPExpansion}
\phi
={}& \sum_{i=0}^k \sum_{j=0}^l (-1)^{k-i+l-j} \binom{k}{i}\binom{l}{j} \phi_{ij'} \SymDyadBasis{i}{k}{j}{l},
\end{align}
where the scalar components of weight $\{k-2i,l-2j\}$ are defined by
\begin{align} \label{eq:phiklGHPComponents}
\phi_{ij'} ={}& \SymDyadBasis{k-i}{k}{l-j}{l} \overset{k,l}{\odot}\phi.
\end{align}

The following two lemmas yield component expressions for general symmetric products and derivatives of symmetric spinors. This allows to expand general symmetric spinor differential equations into dyad components, without expanding the symmetrizations.

\begin{lemma}
For $\phi \in \SymSpace_{i,j}, \varphi \in \mathcal{S}_{k,l}$ the symmetric product has components
\begin{align} \label{eq:SymSpinGHP}
(\phi\overset{m,n}{\odot}\varphi)_{st'} 
={}&\sum_{p=0}^{k}\sum_{q=0}^{l}G_{i,k}^{m,p,s}G_{j,l}^{n,q,t}\phi_{(s+m-p)(t+n-q)'} \varphi_{pq'},
\end{align}
with coefficients given by
\begin{align}
G_{i,k}^{m,p,s}={}&\sum_{r=0}^m (-1)^{r}\frac{\binom{i}{s+m-p} \binom{i +p-s-m}{r} \binom{s+m-p}{m-r}\binom{k}{p} \binom{k -p}{m -  r} \binom{p}{r}}{\binom{i}{m}\binom{k}{m}\binom{m}{r}\binom{i+k-2m}{s}}.
\end{align}
\end{lemma}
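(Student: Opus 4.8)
The plan is to exploit bilinearity of $\overset{m,n}{\odot}$ and reduce the whole statement to the symmetric product of a \emph{pair} of dyad basis spinors. Since the primed contractions and symmetrizations act on an index group disjoint from the unprimed ones, the computation factorizes into an unprimed and a primed part; superimposing them reproduces the product $G_{i,k}^{m,p,s}G_{j,l}^{n,q,t}$ and the double sum over $p,q$ in \eqref{eq:SymSpinGHP}. It therefore suffices to prove $(\phi\overset{m,0}{\odot}\varphi)_{s}=\sum_{p}G_{i,k}^{m,p,s}\,\phi_{s+m-p}\,\varphi_{p}$ for $\phi\in\SymSpace_{i,0}$ and $\varphi\in\SymSpace_{k,0}$.

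First I would substitute the dyad expansion \eqref{eq:phiklGHPExpansion} for both $\phi$ and $\varphi$ and extract the component with \eqref{eq:phiklGHPComponents}, i.e. contract with $\SymDyadBasis{N-s}{N}{0}{0}$ where $N=i+k-2m$. By bilinearity this turns $(\phi\overset{m,0}{\odot}\varphi)_{s}$ into a double sum over the scalars $\phi_{a}$, $\varphi_{p}$, each term carrying the expansion prefactors $(-1)^{i-a}\binom{i}{a}$, $(-1)^{k-p}\binom{k}{p}$ and the purely basis-theoretic number $\SymDyadBasis{N-s}{N}{0}{0}\overset{N,0}{\odot}\bigl(\SymDyadBasis{a}{i}{0}{0}\overset{m,0}{\odot}\SymDyadBasis{p}{k}{0}{0}\bigr)$. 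The key structural remark is that $\SymDyadBasis{a}{i}{0}{0}\overset{m,0}{\odot}\SymDyadBasis{p}{k}{0}{0}$ is a totally symmetric valence-$(N,0)$ spinor built only from $o$ and $\iota$; since each nonzero contraction pairs an $o$ with an $\iota$ (because $o_Ao^A=\iota_A\iota^A=0$), the number of $o$-factors is conserved and equals $a+p-m$. As the space of symmetric valence-$N$ spinors of fixed GHP weight is one-dimensional, this forces $\SymDyadBasis{a}{i}{0}{0}\overset{m,0}{\odot}\SymDyadBasis{p}{k}{0}{0}=H\,\SymDyadBasis{a+p-m}{N}{0}{0}$ for a scalar $H$. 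Hence only $a=s+m-p$ survives the extraction, collapsing the $a$-sum, and \eqref{BasisFullContraction} evaluates the leftover full contraction to $(-1)^{N-s}\binom{N}{s}^{-1}$. Tracking the signs and binomials shows they assemble into an overall prefactor $(-1)^{m}\binom{i}{s+m-p}\binom{k}{p}\binom{N}{s}^{-1}$, so the entire statement reduces to computing $H$.

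The remaining, and main, task is to evaluate $H$. I would classify the $m$ contractions by type: let $r$ of them pair an $\iota$-factor of $\SymDyadBasis{a}{i}{0}{0}$ with an $o$-factor of $\SymDyadBasis{p}{k}{0}{0}$ (each contributing $\iota^{B}o_{B}=+1$), and the complementary $m-r$ pair an $o$-factor against an $\iota$-factor (each contributing $o^{B}\iota_{B}=-1$), so the spin-metric sign is $(-1)^{m-r}$. The number of ways the three symmetrizations distribute the available $o$'s and $\iota$'s over contracted and free slots then produces a factor $\binom{a}{m-r}\binom{i-a}{r}\binom{p}{r}\binom{k-p}{m-r}$, divided by the normalizations $\binom{i}{m}\binom{k}{m}\binom{m}{r}$ coming from the overlapping symmetrizations; summing over $r$ gives $H=(-1)^{m}\binom{i}{m}^{-1}\binom{k}{m}^{-1}\sum_{r}(-1)^{r}\binom{a}{m-r}\binom{i-a}{r}\binom{p}{r}\binom{k-p}{m-r}\binom{m}{r}^{-1}$, which combined with the prefactor above is exactly $G_{i,k}^{m,p,s}$ at $a=s+m-p$. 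The cleanest way to organize this count is to contract all free indices with one variable spinor $\pi_{A}=x\,o_{A}+y\,\iota_{A}$: then $\SymDyadBasis{a}{i}{0}{0}\mapsto(-1)^{i-a}x^{i-a}y^{a}$, the symmetric product $\overset{m,0}{\odot}$ becomes a fixed multiple of the $m$-th transvectant (the $m$-fold Cayley $\Omega$-process) of the two binary forms, and $H$ is read off as the coefficient of the relevant monomial, the factor $\binom{i}{m}^{-1}\binom{k}{m}^{-1}$ being precisely the transvectant normalization $(i-m)!\,(k-m)!/(i!\,k!)$ up to the $(m!)^{2}$ absorbed into the derivatives.

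The hard part is exactly this evaluation of $H$. The three overlapping symmetrizations (one inside each basis factor and the overall one in $\overset{m,0}{\odot}$) make a naive slot count error-prone, and the delicate point is pinning down the normalization denominators $\binom{i}{m}\binom{k}{m}\binom{m}{r}$ and the sign $(-1)^{m-r}$ so that, after multiplying by the extraction factor from \eqref{BasisFullContraction} and the expansion prefactors, every spurious sign cancels and one lands precisely on the stated $G_{i,k}^{m,p,s}$. The binary-form reformulation makes the bookkeeping essentially mechanical, but matching constants against the transvectant normalization is where the care is needed; the special case $m=0$, where the formula collapses to $G_{i,k}^{0,p,s}=\binom{i}{s-p}\binom{k}{p}\big/\binom{i+k}{s}$, is a convenient guard against sign and normalization slips.
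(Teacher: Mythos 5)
Your proposal is correct, and it shares the paper's overall skeleton — reduce to unprimed indices, expand both factors via \eqref{eq:phiklGHPExpansion}, extract components with \eqref{eq:phiklGHPComponents} and \eqref{BasisFullContraction}, and reduce everything to evaluating the symmetric product of two dyad basis elements — but your key inner step is genuinely different. The paper computes that product by writing $\SymDyadBasis{p}{k}{0}{0}=\SymDyadBasis{0}{k-p}{0}{0}\overset{0,0}{\odot}\SymDyadBasis{p}{p}{0}{0}$ and invoking Proposition~\ref{prop:PartialSymMultExpansion} to split each basis element into a free-index block and a contracted-index block, after which \eqref{BasisFullContraction} contracts the latter blocks and Kronecker deltas collapse the sums at the end. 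You instead argue structurally, before any computation, that $\SymDyadBasis{a}{i}{0}{0}\overset{m,0}{\odot}\SymDyadBasis{p}{k}{0}{0}=H\,\SymDyadBasis{a+p-m}{N}{0}{0}$ (each nonvanishing contraction annihilates one $o$ against one $\iota$, and each GHP weight space of symmetric dyad spinors is one-dimensional), which collapses the $a$-sum at once and reduces the whole lemma to a single scalar $H$, evaluated by classifying the $m$ contractions by type. Your constants check out: the weight $\binom{a}{m-r}\binom{i-a}{r}\binom{p}{r}\binom{k-p}{m-r}/\bigl(\binom{i}{m}\binom{k}{m}\binom{m}{r}\bigr)$ with sign $(-1)^{m-r}$ agrees term by term with what the paper's expansion produces, via the identity $\binom{m}{r}\binom{i-m}{a-m+r}/\binom{i}{a}=\binom{a}{m-r}\binom{i-a}{r}/\binom{i}{m}$, and your prefactor bookkeeping ($(-1)^{i-a+k-p+N-s}=(-1)^m$, cancelled by the $(-1)^m$ inside $H$) lands exactly on the stated $G_{i,k}^{m,p,s}$. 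As for what each route buys: the paper reuses an already-proven proposition, so no fresh combinatorics needs justification, at the price of carrying split basis elements and two extra summation indices through the whole computation; your route isolates the only nontrivial number by a representation-theoretic shortcut, and the binary-form device ($\pi_A=x\,o_A+y\,\iota_A$, under which $\SymDyadBasis{a}{i}{0}{0}\mapsto(-1)^{i-a}x^{i-a}y^{a}$ and $\overset{m,0}{\odot}$ becomes a normalized $m$-th transvectant) gives a mechanical, independently checkable way to pin down $H$ — though in substance the count behind $H$ is the same counting that Proposition~\ref{prop:PartialSymMultExpansion} encodes.
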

\begin{proof}
For ease of notation we assume $\phi \in \SymSpace_{i,0}, \varphi \in \mathcal{S}_{k,0}$. 
Using the observation that  $\SymDyadBasis{p}{k}{0}{0}=\SymDyadBasis{0}{k-p}{0}{0}\overset{0,0}{\odot}
\SymDyadBasis{p}{p}{0}{0}$, where $\SymDyadBasis{0}{k-p}{0}{0}$ is a symmetric product of $\iota_A$ and $\SymDyadBasis{p}{p}{0}{0}$ is a symmetric product of $o_A$, we can use \eqref{eq:PartialSymMultExpansion} to obtain
\begin{align}
\SymDyadBasis{p}{k}{0}{0}{}_{A_1\dots A_{k-m}}^{B_1\dots B_m}={}&(\SymDyadBasis{0}{k-p}{0}{0}\overset{0,0}{\odot}
\SymDyadBasis{p}{p}{0}{0}){}_{A_1\dots A_{k-m}}^{B_1\dots B_m}\nonumber\\
={}&\sum_{q=0}^{m}\frac{\binom{p}{q} \binom{k - p}{m-q}}{\binom{k}{m}}
 \SymDyadBasis{0}{k-p}{0}{0}{}^{(B_{q+1}\dots B_{m}}_{(A_1\dots A_{k-p-m+q}}\SymDyadBasis{p}{p}{0}{0}{}_{A_{k-p-m+q+1}\dots A_{k-m})}^{B_1\dots B_q)}\nonumber\\
 ={}&\sum_{q=0}^{m}\frac{\binom{p}{q} \binom{k - p}{m-q}}{\binom{k}{m}}
 \SymDyadBasis{p-q}{k-m}{0}{0}{}_{A_1\dots  A_{k-m}}\SymDyadBasis{q}{m}{0}{0}{}^{B_1\dots B_{m}}
\end{align}
Using this in the expansion \eqref{eq:phiklGHPExpansion}, we find
\begin{subequations}
\begin{align}
\varphi_{A_1 \dots A_{k-m}B_1\dots B_m} 
={}& \sum_{p=0}^k \sum_{q=0}^m (-1)^{k-p}\frac{\binom{k}{p} \binom{k -  p}{m -  q} \binom{p}{q}}{\binom{k}{m}} \varphi_{p0'} 
\SymDyadBasis{p-q}{k-m}{0}{0}{}_{A_1\dots A_{k-m}}
\SymDyadBasis{q}{m}{0}{0}{}_{B_1\dots B_m},\\
\phi_{A_1 \dots A_{i-m}}^{B_1\dots B_m} 
={}& \sum_{r=0}^i \sum_{q=0}^m (-1)^{i-r}\frac{\binom{i}{r} \binom{i - r}{q} \binom{r}{m-q}}{\binom{i}{m}} \phi_{r0'} 
\SymDyadBasis{r-m+q}{i-m}{0}{0}{}_{A_1\dots A_{i-m}}
\SymDyadBasis{q}{m}{0}{0}{}^{B_1\dots B_{m}}.
\end{align}
\end{subequations}
Contracting the $B$ indices, symmetrizing and using \eqref{BasisFullContraction} yield
\begin{align}
&\hspace{-3ex}\phi_{(A_1 \dots A_{i-m}}^{B_1\dots B_m} \varphi_{A_{i-m+1} \dots A_{i+k-2m})B_1\dots B_m} \nonumber\\
={}& \sum_{r=0}^i\sum_{p=0}^k \sum_{q=0}^m (-1)^{k+i-r-p}\frac{\binom{i}{r} \binom{i - r}{q} \binom{r}{m-q}\binom{k}{p} \binom{k -  p}{m -  q} \binom{p}{q}}{\binom{i}{m}\binom{k}{m}}\phi_{r0'} \varphi_{p0'} 
\SymDyadBasis{p+r-m}{i+k-2m}{0}{0}{}_{A_1\dots A_{i+k-2m}}
\nonumber\\
&\times\SymDyadBasis{q}{m}{0}{0}{}_{B_1\dots B_{m}}\SymDyadBasis{q}{m}{0}{0}{}^{B_1\dots B_{m}}\nonumber\\
={}& \sum_{r=0}^i\sum_{p=0}^k \sum_{q=0}^m (-1)^{k+i-r-p+m-q}\frac{\binom{i}{r} \binom{i - r}{q} \binom{r}{m-q}\binom{k}{p} \binom{k -  p}{m -  q} \binom{p}{q}}{\binom{i}{m}\binom{k}{m}\binom{m}{q}}\phi_{r0'} \varphi_{p0'} 
\SymDyadBasis{p+r-m}{i+k-2m}{0}{0}{}_{A_1\dots A_{i+k-2m}}.
\end{align}
The relation \eqref{BasisFullContraction} then gives
\begin{align}
(\phi\overset{m,0}{\odot}\varphi)_{s0'} ={}& \SymDyadBasis{i+k-2m-s}{i+k-2m}{0}{0} \overset{i+k-2m,0}{\odot}(\phi\overset{m,0}{\odot}\varphi)\nonumber\\
={}&\sum_{r=0}^i\sum_{p=0}^k \sum_{q=0}^m (-1)^{-r-p+m-q-s}\frac{\binom{i}{r} \binom{i - r}{q} \binom{r}{m-q}\binom{k}{p} \binom{k -  p}{m -  q} \binom{p}{q}}{\binom{i}{m}\binom{k}{m}\binom{m}{q}\binom{i+k-2m}{i+k-2m-s}}\phi_{r0'} \varphi_{p0'} 
   \delta^{m+s-p}_{r}\nonumber\\
={}&\sum_{p=0}^k \sum_{q=0}^m (-1)^{q}\frac{\binom{i}{m+s-p} \binom{i + p - m - s}{q} \binom{m+s-p}{m-q}\binom{k}{p} \binom{k -  p}{m -  q} \binom{p}{q}}{\binom{i}{m}\binom{k}{m}\binom{m}{q}\binom{i+k-2m}{i+k-2m-s}}\phi_{(m+s-p)0'} \varphi_{p0'} \nonumber\\
={}&\sum_{p=0}^kG_{i,k}^{m,p,s}\phi_{(s+m-p)0'} \varphi_{p0'}.
\end{align}
The primed indices gives an analogous expansion and the combination yields \eqref{eq:SymSpinGHP}. 
\end{proof}

\begin{lemma}
The GHP components of fundamental spinor operators \eqref{eq:FundSpinOps} on $\phi \in \SymSpace_{k,l}$ take the form
\begin{subequations} 
\begin{align} \label{DivGHP}
(\sDiv\phi)_{ij'} ={}&
(\tho - (k-i) \rho -  (l-j) \bar\rho)\phi_{(i+1)(j+1)'} 
+ (\tho' - (i+1) \rho' - (j+1) \bar\rho')\phi_{ij'} \nonumber \\
&- (\edt - (k-i)\tau - (j+1) \bar{\tau}')\phi_{(i+1)j'} 
- (\edt' - (i+1) \tau' - (l-j)\bar{\tau})\phi_{i(j+1)'} \nonumber \\
& + (k-i-1) \kappa \phi_{(i+2)(j+1)'}
- (k-i-1) \sigma \phi_{(i+2)j'}
- i \sigma' \phi_{(i-1)(j+1)'} \nonumber \\
&+ i \kappa' \phi_{(i-1)j'}
+ (l-j-1) \bar\kappa \phi_{(i+1)(j+2)'}
- (l-j-1) \bar\sigma \phi_{i(j+2)'} \nonumber \\
&- j \bar\sigma' \phi_{(i+1)(j-1)'}
+ j \bar\kappa' \phi_{i(j-1)'}, \\
(\sCurl\phi)_{ij'} ={}& \bigl(
-(k-i+1)(\tho + i \rho -  (l-j) \bar\rho)\phi_{i(j+1)'} 
+i(\tho' +(k-i+1)\rho' -(j+1) \bar\rho')\phi_{(i-1)j'} \nonumber \\
&\hspace{1ex}{} + (k-i+1)(\edt + i \tau -  (j+1) \bar\tau')\phi_{ij'} 
-i(\edt' + (k-i+1)\tau'-(l-j) \bar\tau)\phi_{(i-1)(j+1)'} \nonumber \\
&\hspace{1ex}{} - (k-i+1)(k-i) \kappa \phi_{(i+1)(j+1)'} 
+(k-i+1)(k-i) \sigma \phi_{(i+1)j'} 
- i(i-1)  \sigma' \phi_{(i-2)(j+1)'}\nonumber \\
&\hspace{1ex}{}+ i(i-1) \kappa' \phi_{(i-2)j'}
-(k-i+1)(l-j-1) \bar\kappa \phi_{i(j+2)'}
-i(l-j-1) \bar\sigma \phi_{(i-1)(j+2)'}\nonumber \\
&\hspace{1ex}{}+ (k-i+1)j \bar\sigma' \phi_{i(j-1)'}
+ij \bar\kappa' \phi_{(i-1)(j-1)'} \bigr) / (k+1), \\
(\sCurlDagger\phi)_{ij'} ={}&  \bigl(
-(l-j+1)(\tho + j \bar\rho -  (k-i) \rho)\phi_{(i+1)j'} 
+j(\tho' +(l-j+1)\bar\rho' -(i+1) \rho')\phi_{i(j-1)'} \nonumber \\
&\hspace{1ex}{} + (l-j+1)(\edt' + j \bar\tau -  (i+1) \tau')\phi_{ij'} 
-j(\edt + (l-j+1)\bar\tau'-(k-i) \tau)\phi_{(i+1)(j-1)'} \nonumber \\
&\hspace{1ex}{}- (l-j+1)(l-j) \bar\kappa \phi_{(i+1)(j+1)'} 
+(l-j+1)(l-j) \bar\sigma \phi_{i(j+1)'} 
- j(j-1)  \bar\sigma' \phi_{(i+1)(j-2)'}\nonumber \\
&\hspace{1ex}{}+ j(j-1) \bar\kappa' \phi_{i(j-2)'}
-(l-j+1)(k-i-1) \kappa \phi_{(i+2)j'}
-j(k-i-1) \sigma \phi_{(i+2)(j-1)'}\nonumber \\
&\hspace{1ex}{}+ (l-j+1)i \sigma' \phi_{(i-1)j'}
+ij \kappa' \phi_{(i-1)(j-1)'} \bigr) / (l+1), \\
(\sTwist\phi)_{ij'} ={}& \bigl(
(k+1-i)(l+1-j)(\tho + i \rho +  j\bar\rho)\phi_{ij'} \nonumber \\ 
&\hspace{1ex}{}+(k+1-i)j(\edt + i\tau + (l-j+1)\bar\tau')\phi_{i(j-1)'} \nonumber \\ 
&\hspace{1ex}{}+i(l+1-j)(\edt' + (k-i+1)\tau' +  j\bar\tau)\phi_{(i-1)j'} \nonumber \\ 
&\hspace{1ex}{}+ij(\tho' +(k-i+1)\rho' +(l-j+1)\bar\rho')\phi_{(i-1)(j-1)'}  \nonumber \\ 
&\hspace{1ex}{}+(k-i+1)(k-i)(l+1-j) \kappa \phi_{(i+1)j'} 
  + (k-i+1)(k-i)j \sigma \phi_{(i+1)(j-1)'} \nonumber\\
&\hspace{1ex}{}+i(i-1)(l+1-j) \sigma' \phi_{(i-2)j'}
  +i(i-1)j \kappa' \phi_{(i-2)(j-1)'} \nonumber\\
&\hspace{1ex}{}+(k+1-i)(l+1-j)(l-j) \bar\kappa \phi_{i(j+1)'} 
  + i(l+1-j)(l-j) \bar\sigma \phi_{(i-1)(j+1)'}\nonumber\\
&\hspace{1ex}{}+(k+1-i)j(j-1) \bar\sigma' \phi_{i(j-2)'} 
  +ij(j-1) \bar\kappa' \phi_{(i-1)(j-2)'} \big) / \big((k+1)(l+1)\bigr)
\end{align}
\end{subequations} 
\end{lemma}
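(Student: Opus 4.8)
The plan is to reduce each of the four fundamental operators to the weight-covariant connection $\Theta$ of \eqref{ThetaDef} and then compute entirely in the dyad basis. The first observation is that a genuine symmetric spinor $\phi\in\SymSpace_{k,l}$ carries GHP weight $\{0,0\}$ as an abstract-index object, so $\nabla\phi=\Theta\phi$, and hence each operator in \eqref{eq:FundSpinOps} may be written as $\Theta\overset{t,u}{\odot}\phi$ with $(t,u)\in\{0,1\}^2$. Expanding $\phi$ through \eqref{eq:phiklGHPExpansion} as a sum of weighted scalar components $\phi_{ij'}$ times basis elements $\SymDyadBasis{i}{k}{j}{l}$, it then suffices to evaluate $\Theta\overset{t,u}{\odot}\bigl(\phi_{ij'}\,\SymDyadBasis{i}{k}{j}{l}\bigr)$ for each $(i,j)$ and to extract its $(s,t')$ component via \eqref{eq:phiklGHPComponents}.

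Second, I would use the Leibniz property of $\Theta$ (equivalently the Leibniz rules \eqref{eq:SymMultLeibniz}) to split the derivative into a piece acting on the scalar component and a piece acting on the basis. For the scalar piece, the dyad expansion \eqref{ThetaDyad} immediately converts $\Theta$ into the operators $\tho,\edt,\edt',\tho'$ acting on $\phi_{ij'}$, already correctly weighted since $\phi_{ij'}$ has weight $\{k-2i,l-2j\}$. For the basis piece, I would repeatedly apply \eqref{GammaDef}, i.e.\ $\Theta_{AA'}o_B=\Gamma_{AA'}\iota_B$ and $\Theta_{AA'}\iota_B=\Gamma'_{AA'}o_B$, so that hitting any one dyad leg replaces it by the opposite leg times a valence-$(1,1)$ one-form whose dyad components are exactly the spin coefficients $\kappa,\sigma,\rho,\tau$ and their primed and barred versions. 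Crucially, because $\Theta$ is weight covariant the auxiliary one-forms $\omega,\bar\omega$ distinguishing $\nabla$ from $\Theta$ never survive: on the product $\phi_{ij'}\,\SymDyadBasis{i}{k}{j}{l}$ the scalar weight $\{k-2i,l-2j\}$ and the basis weight $\{2i-k,2j-l\}$ are opposite, so their $\omega$-corrections cancel under the product rule and only $\tho,\edt,\edt',\tho'$ and the spin coefficients remain.

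Third, after this split every contribution is a product of a known valence-$(1,1)$ factor (either $\Theta$ applied to a scalar, or a $\Gamma/\Gamma'$) with a shifted dyad basis element. I would then form the outer product $\overset{t,u}{\odot}$ and read off the component using the full-contraction identity \eqref{BasisFullContraction}; equivalently, one feeds the valence-$(1,1)$ factor and the basis element into the product-component formula \eqref{eq:SymSpinGHP}, whose coefficients $G^{m,p,s}_{i,k}$ collapse under the Kronecker deltas of \eqref{BasisFullContraction} to the explicit rational prefactors $\tfrac{1}{k+1}$, $\tfrac{k-i}{\cdots}$, and so on, displayed in the statement. Carrying this out for all four choices of $(t,u)$ and all basis shifts produces the four asserted expansions: the $\tho,\edt,\edt',\tho'$ derivative terms and their accompanying $\rho,\tau$-type corrections land on the same shifted component, while the $\kappa,\sigma$-type coefficients move the indices to neighbouring components $\phi_{(i\pm1)(j\pm1)'}$, exactly as written.

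I expect the only genuine difficulty to be combinatorial bookkeeping rather than anything conceptual. The preceding results — the partial expansion \eqref{eq:PartialSymMultExpansion} and the product-component formula \eqref{eq:SymSpinGHP} — already package the symmetrizations, so once the $\Theta$-split above is in place the rational coefficients are fully determined; the labour lies in pairing each spin coefficient with the correct shifted component and in verifying the twelve component terms appearing in each of $\sTwist$, $\sDiv$ (and the analogous expansions of $\sCurl$, $\sCurlDagger$). A convenient check, and the route taken in practice, is to perform the index algebra term by term in the \emph{SymSpin} implementation and compare against these closed forms.
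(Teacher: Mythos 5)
Your proposal follows essentially the same route as the paper's proof: expand $\phi$ via \eqref{eq:phiklGHPExpansion}, switch to the weighted connection $\Theta_{AA'}$ so the Leibniz rule splits the derivative into GHP operators $\tho,\edt,\edt',\tho'$ acting on the scalar components (via \eqref{ThetaDyad}) plus $\Gamma,\Gamma'$ terms from differentiating the basis (via \eqref{GammaDef}), and then collapse everything with the basis contraction identities \eqref{BasisFullContraction}. Your explicit remark that the $\omega,\bar\omega$ corrections cancel because the scalar components and basis elements carry opposite weights is a correct and slightly more detailed justification of the step the paper states only briefly, but it is the same argument.
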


\begin{proof} 
To prove \eqref{DivGHP}, we start by expanding the argument of $\sDiv\phi$ using \eqref{eq:phiklGHPExpansion} and contract with a symmetric basis as in \eqref{eq:phiklGHPComponents},
\begin{align} \label{DivphiStep1}
(\sDiv\phi)_{ij'} 
={}& \SymDyadBasis{k-1-i}{k-1}{l-1-j}{l-1} \overset{k,l}{\odot} (\sDiv\phi) \nonumber \\
={}& \sum_{n=0}^k \sum_{m=0}^l (-1)^{k-n+l-m} \binom{k}{n}\binom{l}{m} \SymDyadBasis{k-1-i}{k-1}{l-1-j}{l-1} \overset{k,l}{\odot} (\sDiv(\phi_{nm'}  \SymDyadBasis{n}{k}{m}{l}) ).
\end{align}
Next, we use the Leibniz rule \eqref{DivLeibniz}, but switch to the GHP connection $\Theta_{AA'}$ (so the fundamental spinor operators are with respect to $\Theta_{AA'}$ instead of $\nabla_{AA'}$) as the GHP components and the basis elements are GHP weighted,
\begin{align}\label{DivphiBasis}
\sDiv( \phi_{nm'}\underset{k,l}{\overset{0,0}{\odot}}\SymDyadBasis{n}{k}{m}{l})={}&\SymDyadBasis{n}{k}{m}{l} \overset{1,1}{\odot}\sTwist \phi_{nm}
 + \phi_{nm}\overset{0,0}{\odot}\sDiv \SymDyadBasis{n}{k}{m}{l}.
\end{align}
From \eqref{ThetaDyad} and \eqref{GammaDef} we have
\begin{align}\label{Twistphinm}
\sTwist\phi_{nm'} ={}& (\tho \phi_{nm'}) \SymDyadBasis{0}{1}{0}{1} - (\edt \phi_{nm'}) \SymDyadBasis{0}{1}{1}{1} - (\edt' \phi_{nm'}) \SymDyadBasis{1}{1}{0}{1} + (\tho' \phi_{nm'}) \SymDyadBasis{1}{1}{1}{1},
\end{align}
and
\begin{align} \label{DivBasis}
\sDiv \SymDyadBasis{n}{k}{m}{l} 
={}& n \Gamma \overset{1,1}{\odot} \SymDyadBasis{n-1}{k}{m}{l} 
+ (k-n) \Gamma' \overset{1,1}{\odot} \SymDyadBasis{n+1}{k}{m}{l}  +m \overline{\Gamma} \overset{1,1}{\odot} \SymDyadBasis{n}{k}{m-1}{l} 
+(l-m) \overline{\Gamma'} \overset{1,1}{\odot} \SymDyadBasis{n}{k}{m+1}{l}.  
\end{align}
Inserting \eqref{Twistphinm}, \eqref{DivBasis} back into \eqref{DivphiBasis} and expanding $\Gamma, \Gamma'$ into the basis we can use the contraction rules
\begin{subequations} \label{Basis11vec}
\begin{align}
\SymDyadBasis{n}{k}{m}{l} \overset{1,1}{\odot}\SymDyadBasis{0}{1}{0}{1} ={}&\frac{m n}{k l}\SymDyadBasis{n-1}{k-1}{m-1}{l-1}, \\
\SymDyadBasis{n}{k}{m}{l} \overset{1,1}{\odot}\SymDyadBasis{1}{1}{0}{1}={}&- \frac{m (k -  n)}{k l} \SymDyadBasis{n}{k-1}{m-1}{l-1}, \\
\SymDyadBasis{n}{k}{m}{l} \overset{1,1}{\odot}\SymDyadBasis{0}{1}{1}{1}={}&- \frac{n (l -  m)}{k l} \SymDyadBasis{n-1}{k-1}{m}{l-1}, \\
\SymDyadBasis{n}{k}{m}{l} \overset{1,1}{\odot}\SymDyadBasis{1}{1}{1}{1}={}&\frac{(k -  n) (l -  m)}{k l}\SymDyadBasis{n}{k-1}{m}{l-1},
\end{align}
\end{subequations}
which are easily verified by expanding out the symmetries. The result can now be substituted into \eqref{DivphiStep1}. Each term has a full contraction of the form \eqref{BasisFullContraction} which cancels the double sum due to the $\delta$ factors. After some elementary algebra, the end result is given by \eqref{DivGHP}. The other expansions can be verified along the same lines, the only minor computation that needs to be done is the analog of \eqref{DivBasis} and \eqref{Basis11vec}.
\end{proof}

\section{SymSpin: A computer algebra implementation in xAct}\label{sec:SymSpinPackage} 

The \emph{xAct} \cite{xActWeb} suite for \emph{Mathematica} is an open source project mainly devoted to symbolic computation in differential geometry and tensor algebra. In this section we introduce our contributed package \emph{SymSpin} \cite{SymSpinWeb} which contains the formalism of Section~\ref{sec:SymSpinAlg}. For syntax and more examples, see \emph{SymSpinDoc.nb} on that page.

\subsection{Loading the package and defining structures}
Load the package, define a four dimensional manifold M4, and Lorentzian metric  with
\begin{mma}
	\mmain{<$\mspace{0.5mu}$<xAct\textasciigrave SymSpin\textasciigrave\linebreak
	\$DefInfoQ=False;\linebreak
	DefManifold[M4,4,\{a,b,c,d\}]\linebreak
	DefMetric[\{1,3,0\},g[-a,-b],CD]}
\end{mma}%
By default the valence numbers are displayed for each operator and complex conjugates are written with $\dagger$. To keep the notation the same as in the rest of the paper, we can change the display form with
\begin{mma}
	\mmain{SetOptions[DefAbstractIndex,PrintAs->PrimeDagger];\linebreak
	SetOptions[DefSpinor,~PrintDaggerAs->AddBar];\linebreak
	SetOptions[DefFundSpinOperators,ShowValenceInfo->False];} 
\end{mma}%
Define the spin structure, initialize \emph{SymSpin} and define the fundamental spinor operators with
\begin{mma}
	\mmain{DefSpinStructure[g,Spin,\{A,B,C,F,G,H,P,Q,R\},$\epsilon $,$\sigma $,CDe,\{";","$\nabla$"\},\linebreak
	SpinorPrefix->SP,SpinorMark->"S"]\linebreak
	InitSymSpin[$\sigma$];\linebreak
	DefFundSpinOperators[CDe];}
\end{mma}%

\subsection{Example: Coefficients} \label{sec:example}
Assume that $K$, $L$ and $M$ are symmetric spinor fields, and we want to find under which conditions of $K$, $L$ and $M$ the equation
\begin{align}
0={}&K_{AB}{}^{FH} L_{F}{}^{C} \varphi_{HC}
 + M_{(A}{}^{C}\varphi_{B)C}.
\end{align}
holds for all symmetric spinor fields $\varphi$. 
The following calculation leads to the conditions
\begin{align}
K^{G}{}_{(ABC}L_{|G|F)}={}&0,&
M_{AB}={}&\tfrac{1}{2} K^{CF}{}_{AB} L_{CF}.
\end{align}

We first define the symmetric spinor fields. For clarity we have added the valence numbers to the names of the spinors, but not the display form.
\begin{mma}
	\mmain{DefSymmetricSpinor[$\varphi $20,2,0,Spin,"$\varphi $"]\linebreak
	DefSymmetricSpinor[K40,4,0,Spin,"K"]\linebreak
	DefSymmetricSpinor[L20,2,0,Spin,"L"]\linebreak
	DefSymmetricSpinor[M20,2,0,Spin,"M"]}
\end{mma}%
One can start with the indexed version of the spinor equation.
\begin{mma}
	\mmain{OriginalEq=0==K40[-A,-B,F,H]L20[-F,C]$\varphi $20[-H,-C]\linebreak+ImposeSym[M20[-A,C]*$\varphi $20[-B,-C]]}\\ 
	\mmaout{0~\texttt{==}~K_{AB}{}^{FH} L_{F}{}^{C} \varphi_{HC} + \underset{\scriptscriptstyle(13)}{Sym}[M\varphi ]_{A}{}^{C}{}_{BC}}
\end{mma}%
To convert this to the new formalism, we need the irreducible decomposition of the product of the $L$ and $\phi$ spinor. 
\begin{mma}
	\mmain{IrrDecomposeSymMult[L20,$\varphi $20,\{0,0\}]}\\ 
	\mmaout{L_{AB} \varphi_{CF}~\texttt{==}~- \underset{\scriptscriptstyle(13)(24)}{Sym}[\epsilon (L{\overset{1,0}{\odot}}\varphi)]_{ACBF} + (L{\overset{0,0}{\odot}}\varphi)_{ABCF} + \tfrac{1}{3} \underset{\scriptscriptstyle(13)(24)}{Sym}[\epsilon \epsilon ]_{ACBF} (L{\overset{2,0}{\odot}}\varphi)}
\end{mma}%
It is convenient to work with the expanded and canonicalized version
\begin{mma}
	\mmain{L20$\varphi $20IrrDecEq=ToCanonical@ExpandSym@\%}\\ 
	\mmaout{L_{AB} \varphi_{CF}~\texttt{==}~(L{\overset{0,0}{\odot}}\varphi)_{ABCF} -  \tfrac{1}{4} \epsilon_{BF} (L{\overset{1,0}{\odot}}\varphi)_{AC} -  \tfrac{1}{4} \epsilon_{BC} (L{\overset{1,0}{\odot}}\varphi)_{AF} -  \tfrac{1}{4} \epsilon_{AF} (L{\overset{1,0}{\odot}}\varphi)_{BC}}\\ 
	\mmanoout{ -  \tfrac{1}{4} \epsilon_{AC} (L{\overset{1,0}{\odot}}\varphi)_{BF} + \tfrac{1}{6} \epsilon_{AF} \epsilon_{BC} (L{\overset{2,0}{\odot}}\varphi) + \tfrac{1}{6} \epsilon_{AC} \epsilon_{BF} (L{\overset{2,0}{\odot}}\varphi)}
\end{mma}%

To work efficiently we turn the original equation into an index-free version. One could also use the index-free version as a starting point.
\begin{mma}
	\mmain{IndexFreeEq=ToIndexFree[ToCanonical@ContractMetric[OriginalEq\linebreak
	/.EqToRule@L20$\varphi $20IrrDecEq]//.SymHToSymMultRule]\linebreak
	/.MultScalToSymMultRule[Spin]/.SortSymMult[Not@FreeQ[\#,$\varphi $20]\&]}\\ 
	\mmaout{0~\texttt{==}~M{\overset{1,0}{\odot}}\varphi + K{\overset{2,0}{\odot}}L{\overset{1,0}{\odot}}\varphi }
\end{mma}%
We can turn the spinor valued equation into a scalar equation by contracting it with a dummy spinor $T$ to turn the free indices into contracted dummy indices. This dummy spinor is defined by
\begin{mma}
	\mmain{DefSymmetricSpinor[T20,2,0,Spin,"T"]}
\end{mma}%
As the field $\varphi$ and the dummy spinor $T$ both should be arbitrary, we see that the irreducible components of their product can be treated as independent arbitrary fields. For convenience we make a list of them with
\begin{mma}
	\mmain{IrrDecComps=SymMult[T20,\#,0,Spin][$\varphi $20]\&/@Range[0,2]}\\ 
	\mmaout{\texttt{\{}(T{\overset{0,0}{\odot}}\varphi)\texttt{, }(T{\overset{1,0}{\odot}}\varphi)\texttt{, }(T{\overset{2,0}{\odot}}\varphi)\texttt{\}}}
\end{mma}%
We can now contract our index-free equation with $T$.
\begin{mma}
	\mmain{SymMult[T20,2,0]/@IndexFreeEq}\\ 
	\mmaout{0~\texttt{==}~T{\overset{2,0}{\odot}}M{\overset{1,0}{\odot}}\varphi + T{\overset{2,0}{\odot}}K{\overset{2,0}{\odot}}L{\overset{1,0}{\odot}}\varphi }
\end{mma}%
Commute $T$ inside, so that $T$ is directly contracted with the field $\varphi$, so we obtain the independent spinors in the list \texttt{IrrDecComps}.
\begin{mma}
	\mmain{\%//.CommuteSymMultRuleIn[T20]}\\ 
	\mmaout{0~\texttt{==}~- M{\overset{2,0}{\odot}}T{\overset{1,0}{\odot}}\varphi + K{\overset{4,0}{\odot}}L{\overset{1,0}{\odot}}T{\overset{0,0}{\odot}}\varphi + \tfrac{1}{2}K{\overset{4,0}{\odot}}L{\overset{0,0}{\odot}}T{\overset{1,0}{\odot}}\varphi }
\end{mma}%
Now, these independent spinors are moved out and to the left.
\begin{mma}
	\mmain{\%/.SortSymMultReverse[MemberQ[IrrDecComps,\#]\&]\linebreak
	//.Flatten[CommuteSymMultRuleOut/@IrrDecComps]}\\ 
	\mmaout{0~\texttt{==}~- (T{\overset{1,0}{\odot}}\varphi){\overset{2,0}{\odot}}M + (T{\overset{0,0}{\odot}}\varphi){\overset{4,0}{\odot}}K{\overset{1,0}{\odot}}L + \tfrac{1}{2}(T{\overset{1,0}{\odot}}\varphi){\overset{2,0}{\odot}}K{\overset{2,0}{\odot}}L}
\end{mma}%
From this one can conclude that the coefficients of $(T\overset{1,0}{\odot}\varphi)$ and $(T\overset{0,0}{\odot}\varphi)$ both have to be zero. 

As a convenience, we have implemented all of the steps from the index-free equation to the final list of equations in one function.
\begin{mma}
	\mmain{ExtractCoeffsIndexFree[IndexFreeEq,$\varphi $20]}\\ 
	\mmaout{\texttt{\{}0~\texttt{==}~(K{\overset{1,0}{\odot}}L)\texttt{, }0~\texttt{==}~- M + \tfrac{1}{2}K{\overset{2,0}{\odot}}L\texttt{\}}}
\end{mma}%
This can be translated back to the indexed form with
\begin{mma}
	\mmain{ToIndexed/@\%}\\ 
	\mmaout{\texttt{\{}0~\texttt{==}~\underset{\scriptscriptstyle(2346)}{Sym}[KL]^{G}{}_{ABCGF}\texttt{, }0~\texttt{==}~\tfrac{1}{2} K^{CF}{}_{AB} L_{CF} -  M_{AB}\texttt{\}}}
\end{mma}%
Performing this kind of calculation in the indexed form would require expansions of symmetries and several steps of irreducible decompositions of different products. This new method was heavily used in \cite{JacBac2022}.

\subsection{Example: Derivatives}
To also demonstrate how to work with derivatives we use the previously defined field $\varphi$ and define a valence $(3,2)$ field $\psi$ via
\begin{mma}
	\mmain{DefSymmetricSpinor[$\psi $32,3,2,Spin,"$\psi $"]}
\end{mma}%
The covariant derivative
\begin{mma}
	\mmain{CDe[-A,-A$\dagger $]@$\psi $32[-B,-C,-F,-B$\dagger $,-C$\dagger $]}\\ 
	\mmaout{\nabla_{AA'}\psi_{BCFB'C'}}
\end{mma}%
can be decomposed into the fundamental spinor operators with
\begin{mma}
	\mmain{\%==ToFundSpinOp[\%]}\\ 
	\mmaout{\nabla_{AA'}\psi_{BCFB'C'}~\texttt{==}~- \tfrac{1}{3} \bar{\epsilon}_{A'C'} (\sCurl \psi)_{ABCFB'} -  \tfrac{1}{3} \bar{\epsilon}_{A'B'} (\sCurl \psi)_{ABCFC'} -  \tfrac{1}{4} \epsilon_{AF} (\sCurlDagger \psi)_{BCA'B'C'}}\\ 
	\mmanoout{ -  \tfrac{1}{4} \epsilon_{AC} (\sCurlDagger \psi)_{BFA'B'C'} -  \tfrac{1}{4} \epsilon_{AB} (\sCurlDagger \psi)_{CFA'B'C'} + \tfrac{1}{12} \epsilon_{AF} \bar{\epsilon}_{A'C'} (\sDiv \psi)_{BCB'}}\\ 
	\mmanoout{ + \tfrac{1}{12} \epsilon_{AF} \bar{\epsilon}_{A'B'} (\sDiv \psi)_{BCC'} + \tfrac{1}{12} \epsilon_{AC} \bar{\epsilon}_{A'C'} (\sDiv \psi)_{BFB'} + \tfrac{1}{12} \epsilon_{AC} \bar{\epsilon}_{A'B'} (\sDiv \psi)_{BFC'}}\\ 
	\mmanoout{ + \tfrac{1}{12} \epsilon_{AB} \bar{\epsilon}_{A'C'} (\sDiv \psi)_{CFB'} + \tfrac{1}{12} \epsilon_{AB} \bar{\epsilon}_{A'B'} (\sDiv \psi)_{CFC'} + (\sTwist \psi)_{ABCFA'B'C'}}
\end{mma}%
Commutators can be handled like
\begin{mma}
	\mmain{DivCDe@CurlDgCDe@$\psi $32}\\ 
	\mmaout{(\sDiv \sCurlDagger \psi)}
\end{mma}%
\begin{mma}
	\mmain{\%==(\%/.CommuteOp[DivCDe,CurlDgCDe])}\\ 
	\mmaout{(\sDiv \sCurlDagger \psi)~\texttt{==}~\tfrac{2}{3}\sCurlDagger \sDiv \psi + \Psi {\overset{3,0}{\odot}}\psi + 2\Phi {\overset{2,1}{\odot}}\psi }
\end{mma}%
Derivatives of products can also be handled efficiently
\begin{mma}
	\mmain{CurlDgCDe@SymMult[$\varphi $20,1,0]@$\psi $32}\\ 
	\mmaout{(\sCurlDagger \varphi {\overset{1,0}{\odot}}\psi)}
\end{mma}%
\begin{mma}
	\mmain{\%$== $(\%/.SymMultLeibnizRules[CDe])}\\ 
	\mmaout{(\sCurlDagger \varphi {\overset{1,0}{\odot}}\psi)~\texttt{==}~\tfrac{2}{3}\psi {\overset{2,0}{\odot}}\sTwist \varphi -  \tfrac{5}{9}\psi {\overset{1,0}{\odot}}\sCurlDagger \varphi -  \tfrac{1}{3}\varphi {\overset{2,0}{\odot}}\sTwist \psi + \tfrac{5}{6}\varphi {\overset{1,0}{\odot}}\sCurlDagger \psi }
\end{mma}%

\section{Conclusions and discussion}\label{sec:Conclusions}

In this work, we introduced an algebra on symmetric 2-spinors and the corresponding \emph{SymSpin} package for the \emph{Mathematica} suite \emph{xAct}. In various research projects of the authors this algebra turned out to be a very efficient way to perform calculations.
For example in \cite{JacBac2022} it is used to derive conditions on the spacetime for the existence of second order symmetry operators for the massive Dirac equation. This greatly simplified the calculations compared to the earlier approach \cite{AndBaeBlu14a}, where only parts of the formalism were used to investigate symmetry operators for the massless Dirac and the Maxwell equations.

The formalism is very efficient for cases where each spinor appears only once in each product. Choosing a preferred ordering of the factors in each product, one can use the relations in Theorem \ref{thm:SymProdProperties} to rewrite them in a canonical form.
However, if a spinor appears multiple times in a product the relations in Theorem \ref{thm:SymProdProperties} can give non-trivial equations where a term of the same form can appear both in the left and right hand sides as well as in several equations. Solving these equations, it should be possible to develop a method to write such products in a canonical form. We plan to continue the development of these tools for such cases in the future.

\subsection*{Acknowledgements}
The authors are grateful to Simon Jacobsson for testing of the \emph{xAct} implementation and to Teake Nutma for \LaTeX{}  typesetting code of \emph{Mathematica} expressions.


%

\pagebreak

\end{document}